\newtheorem{theorem}{Theorem}[section]
\newtheorem{lemma}[theorem]{Lemma}
\newtheorem{example}{Example}[section]
\newcommand{\M}{\ensuremath{\mathcal{M}}} 
\newcommand{\B}{\ensuremath{\mathcal{B}}} 
\newcommand{\zed}{\ensuremath{\mathbb{Z}}} 
\newcommand{\SSS}{\ensuremath{\mathcal{S}}} 
\newcommand{\K}{\ensuremath{\mathcal{K}}} 
\newcommand{\Rec}{\ensuremath{\mathsf{Reconstruct}}} 
\newcommand{\Prob}{\ensuremath{\mathsf{Prob}}} 
\title{On the equivalence of authentication codes and robust $(2,2)$-threshold schemes\thanks{This research was supported by a ``Research in Pairs'' grant from the London Mathematical Society.}}
\author[1]{Maura B.\ Paterson}
\author[2]{Douglas R.\ Stinson%
\thanks{D.R.\ Stinson's research is supported by  NSERC discovery grant RGPIN-03882.}}
\affil[1]{Department of Economics,
Mathematics and Statistics\\ Birkbeck, University of London,
Malet Street, London WC1E 7HX, UK}
\affil[2]{David R.\ Cheriton School of Computer Science, University of Waterloo,
Waterloo, Ontario, N2L 3G1, Canada}
\begin{document}
\maketitle

\begin{abstract}
In this paper, we show a ``direct'' equivalence between certain authentication codes and robust secret sharing schemes. It was previously known that authentication codes and robust secret sharing schemes are closely related to similar types of designs, but direct equivalences had not been considered in the literature. Our new equivalences motivate the consideration of a certain ``key-substitution attack.'' We study this attack and analyze it in the setting of ``dual authentication codes.'' We also show how this viewpoint provides a nice way 
to prove properties  and generalizations of  some known constructions.
\end{abstract}

\section{Introduction}

We begin by giving some relevant definitions for authentication codes and threshold schemes. We should emphasize that we are only considering unconditionally secure cryptographic protocols in this paper. 

\subsection{Authentication Codes}

We follow Simmons' model for authentication \cite{Simmons}. A key $K$ determines an \emph{encoding rule} $e_K$, which is a possibly randomized mapping $e_K : \SSS \rightarrow \M$. Elements in $\SSS$ are called \emph{sources} and elements of $\M$ are \emph{messages}.
In general, we view $e_K(s)$ as a set of messages in the case that encoding is randomized (this is often called \emph{authentication with splitting}). The key $K$ is chosen from a \emph{keyspace} $\K$. The key and the source can be treated as independent random variables. If $|e_K(s)| = c$ for all $s \in \SSS$ and all $K \in \K$, the code is called a \emph{$c$-splitting} authentication code. 

Given a key $K$ and a message $m$, at most one source should be ``possible.'' 
That is, for every key $K$, we require that 
$e_K(s) \cap e_K(s') = \emptyset$ if $s\neq s'$. This ensures that the receiver, who has the key $K$ and a message $m \in e_K(s)$,
can uniquely determine the source $s$.

For any key $K$, denote  \[\mu(K) = \bigcup_s e_K(s).\] The set $\mu(K)$ consists of all the  messages that are  valid encodings of a source under key $K$. 

Also, for any message $m$, denote  \[\kappa(m) = \{ K : m \in \mu(K)\}.\] The set $\kappa(m)$ consists of all the keys for which $m$ is a valid encoding of some source.

The \emph{encoding matrix} of an authentication code is a matrix $E$ in which the rows are indexed by the keys in $\K$ and the columns are indexed by the sources in $\SSS$. The entry $E(k,s)$ is simply the set (of messages) $e_K(s)$. The entries $E(k,s)$ in the encoding matrix are singletons if and only if the code has no splitting, i.e., $c=1$.

We are primarily interested in authentication codes having \emph{perfect secrecy}, i.e.,  codes having the property that a message reveals no information about the source to an adversary who does not know the key. We also often want authentication codes that are secure against both message-substitution and key-substitution attacks. These attacks are defined as follows.

In a \emph{message-substitution attack} (also called a \emph{substitution attack}), the adversary sees a message $m$ and replaces it with a message $m' \neq m$. The adversary wins if
$m' \in e_K(s')$ and $m \in e_K(s)$, where $K$ is the (unknown) secret key and $s' \neq s$.
This is often just called a \emph{substitution attack}; these types of attacks have been considered for many years.

In a \emph{key-substitution attack}, the adversary sees a key $K$ and replaces it with a key $K' \neq K$. 
The adversary wins if
$m \in e_K(s)$ and $m \in e_{K'}(s')$, where $m$ is the (unknown) message and $s' \neq s$.
 This is perhaps a less natural type of attack to consider than a message-substitution attack. In fact, we are not aware of any previous study of unconditionally secure authentication codes that considers key-substitution attacks.
 However, we should note that a similar attack has been studied in the setting of systematic algebraic manipulation detection (AMD) codes; see \cite{Cramer}.
 
There is another attack that is often studied for authentication codes, namely, an \emph{impersonation attack}. 
In this attack, the adversary chooses a message, without seeing a ``previous'' message, hoping that it is an encoding of some source under the (unknown) key $K$.

The \emph{success probability} of an attack (impersonation, message-substitution or key-substitution) is the probability that the adversary wins the corresponding ``game.'' This probability is computed over a random choice of key, source, and message encoding  (if encoding of messages is randomized, i.e., in the case of a code with splitting) according to the  probability distributions 
specified on them. Throughout this paper, we assume that the  probability distributions defined on the keys and message encodings are uniform. That is, $\Prob [K] = 1/b$ for all keys $K \in \K$, and in a $c$-splitting code, for a given key $K \in \K$ and source $s \in \SSS$, we have $\Prob [m] = 1/c$ for all $m \in e_K(s)$.

Source probability distributions are often (but not always) assumed to be uniform.
Also, in all of the attacks we study, we assume that a key is used to encode only one message.

The adversary's optimal success probability for an impersonation attack is often denoted by $P_{d_0}$ and their 
optimal success probability for a message-substitution attack is denoted by $P_{d_1}$.
For $c$-splitting authentication codes, the following bounds are known.

\begin{theorem}
\label{bounds.thm}
\cite{Simmons,DeS,Blun}
Suppose a $c$-splitting authentication code for $k$ sources has $v$ messages. Then $P_{d_0} \geq ck/v$ and
$P_{d_1} \geq c(k-1)/(v-1)$.
\end{theorem}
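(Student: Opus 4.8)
The plan is to derive each bound by an averaging argument over the key space, exploiting the fact that the key distribution is uniform with $\Prob[K] = 1/b$. For the impersonation bound $P_{d_0} \geq ck/v$, I would reason as follows. The adversary's optimal impersonation strategy is to pick a message $m$ that maximises the probability that $m$ is a valid encoding under the secret key, i.e.\ that maximises $\Prob[K \in \kappa(m)] = |\kappa(m)|/b$. So $P_{d_0} = \max_m |\kappa(m)|/b$. To bound this below, I would count incidences between keys and messages: each key $K$ contributes $|\mu(K)| = ck$ valid messages (here the $c$-splitting property and the fact that there are $k$ sources, with the images $e_K(s)$ pairwise disjoint, give exactly $ck$), so summing over all $b$ keys yields $\sum_m |\kappa(m)| = bck$. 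Since there are $v$ messages, the average value of $|\kappa(m)|$ is $bck/v$, hence the maximum is at least that, giving $P_{d_0} \geq ck/v$.

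For the substitution bound $P_{d_1} \geq c(k-1)/(v-1)$, the argument is similar but conditioned on having observed a message $m$. After seeing $m$, the adversary knows $K \in \kappa(m)$, and wants to choose $m' \neq m$ maximising the conditional probability that $m'$ is a valid encoding of a \emph{different} source under $K$. I would fix a message $m$ (say one observed with positive probability), restrict attention to the keys in $\kappa(m)$, and for each such key $K$ count the messages $m' \neq m$ lying in $\mu(K)$ but not in the same source class as $m$: since $m$ occupies one source class (accounting for up to $c$ messages of $\mu(K)$, and at least $1$, namely $m$ itself), each $K \in \kappa(m)$ contributes at least $ck - c = c(k-1)$ such messages $m'$. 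Averaging $|\{K \in \kappa(m) : m' \text{ good}\}|$ over the $v-1$ choices of $m' \neq m$ gives an average of at least $c(k-1)|\kappa(m)|/(v-1)$, so some $m'$ achieves at least the conditional success probability $c(k-1)/(v-1)$; averaging (or taking the worst case) over the observed $m$ preserves the bound.

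The main obstacle, and the place where care is needed, is the substitution case when source probabilities are not assumed uniform and when splitting interacts subtly with the conditioning on $m$. One has to be careful that "the source class of $m$ under $K$" removes exactly the right number of messages: because the code has $c$-splitting, a single source accounts for $c$ messages in $\mu(K)$, so it is the count $ck - c = c(k-1)$ that survives, not $ck-1$. I would also need to confirm that the averaging step is legitimate — i.e.\ that one may choose a single $m'$ that works well simultaneously across the relevant keys in the weighted sense, which is exactly what an averaging (pigeonhole) argument over $m'$ delivers, since the adversary is free to fix $m'$ after seeing $m$. For the purposes of this paper I would likely just cite \cite{Simmons,DeS,Blun} for the precise derivation and record the statement, since these bounds are classical; the sketch above indicates why they hold and will suffice as motivation for the equivalences developed in the sequel.
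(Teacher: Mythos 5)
Your proposal is correct. Note that the paper itself does not prove this theorem---it is recorded as a classical result with citations to Simmons, De Soete, and Blundo et al.---so the only in-paper comparison available is Lemma \ref{PD0.lem}, and your impersonation argument is essentially identical to it: you count the incidences $\sum_m |\kappa(m)| = \sum_K |\mu(K)| = bck$ and conclude by averaging over the $v$ messages that some $m$ has $|\kappa(m)| \geq bck/v$, exactly the set $A$ argument in that lemma. Your substitution argument is a correct conditional averaging: given an observed $m$ with $m \in e_K(s)$, every key $K \in \kappa(m)$ admits exactly $ck - c = c(k-1)$ messages $m' \neq m$ lying in $\mu(K)$ but outside $e_K(s)$ (you correctly identify $c(k-1)$ rather than $ck-1$, since substitutes inside the same source class do not count as wins under the paper's definition), so averaging the success probability over the $v-1$ candidates $m'$ gives at least $c(k-1)/(v-1)$ for some $m'$, uniformly in $m$. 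The one point you flag---whether the averaging is legitimate ``in the weighted sense''---is indeed the only place needing care, and it goes through cleanly: the average of $\sum_{K} \Prob[K \mid m]\,\mathbf{1}[m' \text{ wins}]$ over $m' \neq m$ equals $\frac{c(k-1)}{v-1}\sum_K \Prob[K \mid m] = \frac{c(k-1)}{v-1}$ precisely because the count of winning substitutes is the same constant $c(k-1)$ for every key in $\kappa(m)$, so neither the posterior on keys nor the source distribution affects the bound. Deferring the formal details to the cited references, as you suggest, matches what the paper actually does.
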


For a code without splitting (i.e., $c=1$), the bounds obtained from Theorem \ref{bounds.thm} are 
$P_{d_0} \geq k/v$ and
$P_{d_1} \geq (k-1)/(v-1)$. This bound on $P_{d_1}$ was first proved by Massey \cite{Mas}.


The following two results will be used several times later in the paper. They do not assume equiprobable sources. 

\begin{lemma}
\label{PD0.lem}
Suppose a $c$-splitting authentication code for $k$ sources has $v$ messages, $b$ equiprobable keys and equiprobable message encoding. Then 
$P_{d_0} = ck/v$ if and only if $| \kappa(m) | = bck/v$ for all messages $m$.
\end{lemma}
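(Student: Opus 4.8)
The plan is to compute the optimal impersonation probability $P_{d_0}$ directly from its definition and then relate the resulting expression to the quantities $\kappa(m)$.

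First I would recall what the adversary does in an impersonation attack: they pick a single message $m \in \M$, and they win precisely when $m$ is a valid encoding under the (randomly chosen) key $K$, i.e. when $m \in \mu(K)$, equivalently $K \in \kappa(m)$. Since the keys are equiprobable with $\Prob[K] = 1/b$, the success probability for a fixed choice of $m$ is $\Prob[K \in \kappa(m)] = |\kappa(m)|/b$. Note that the source distribution and the message-encoding randomization play no role here, because being a valid encoding of \emph{some} source under $K$ only depends on $m$ lying in $\mu(K)$; this is why the lemma does not need equiprobable sources. The optimal adversary then chooses $m$ to maximize this, so $P_{d_0} = \max_{m \in \M} |\kappa(m)|/b = (\max_m |\kappa(m)|)/b$.

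Next I would establish the averaging identity $\sum_{m \in \M} |\kappa(m)| = \sum_{K \in \K} |\mu(K)|$ by double counting incidences between keys and messages with $m \in \mu(K)$. In a $c$-splitting code, $|\mu(K)| = \left|\bigcup_s e_K(s)\right| = ck$ for every key, since the sets $e_K(s)$ are disjoint over the $k$ sources and each has size $c$. Hence $\sum_{m} |\kappa(m)| = bck$, so the average value of $|\kappa(m)|$ over the $v$ messages is $bck/v$. Therefore $\max_m |\kappa(m)| \geq bck/v$, with equality if and only if $|\kappa(m)|$ is constant and equal to $bck/v$ for all $m$. (This also re-derives the bound $P_{d_0} \geq ck/v$ from Theorem~\ref{bounds.thm}.)

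Combining the two steps: $P_{d_0} = (\max_m |\kappa(m)|)/b = ck/v$ holds if and only if $\max_m |\kappa(m)| = bck/v$, which by the averaging argument holds if and only if $|\kappa(m)| = bck/v$ for all messages $m$. I do not anticipate a serious obstacle here — the only point requiring a little care is being explicit that the impersonation success probability genuinely depends only on $K \in \kappa(m)$ and not on the source or encoding randomness, so that the formula $|\kappa(m)|/b$ is exact rather than a bound; once that is pinned down the rest is the standard "max equals average iff constant" argument.
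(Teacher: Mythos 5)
Your proposal is correct and follows essentially the same argument as the paper: the success probability of attacking with a fixed $m$ is $|\kappa(m)|/b$, and the double-counting identity $\sum_m |\kappa(m)| = \sum_K |\mu(K)| = bck$ gives the ``max equals average iff constant'' conclusion. Your explicit remark that the source distribution and splitting randomness are irrelevant (only $K \in \kappa(m)$ matters) is a nice clarification, but the underlying proof is the paper's.
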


\begin{proof}
Let $K$ be the key that was chosen by the sender/receiver. The message $m$ chosen by the attacker 
will be accepted as valid if and only if 
$K \in \kappa(m)$. Since there are $b$ possible keys, the choice of $m$ will be a successful impersonation with probability 
$| \kappa(m) | / b$. 

Define \[A = \{ (K,m) : K \in \kappa(m)\}.\] Clearly 
\[|A| = \sum_{m \in \M} | \kappa(m) |.\]
However, we also have 
\[ |A| = \sum_{K \in \K } | \mu(K) | = bck.\] Therefore, 
\[ \max \{ | \kappa(m) | : m \in \M \} \geq \frac{bck}{v},\]
and equality occurs if and only if $| \kappa(m) | = bck/v$ for all messages $m$.

The adversary's optimal attack is to choose $m$ so that $| \kappa(m) |$ is maximized. Hence, the maximum success probability of an impersonation attack is at least $ck/v$, and equality occurs if and only if $| \kappa(m) | = bck/v$ for all messages $m$.
\end{proof}

\begin{theorem}
\label{secrecy.lem}
Suppose a $c$-splitting authentication code for $k$ sources has $v$ messages, $b$ equiprobable keys and equiprobable message encoding. Consider the following three conditions: 
\begin{enumerate}
\item $P_{d_0} = ck/v$;
\item the code achieves perfect secrecy; 
\item within each column of the encoding matrix, every message occurs the same number of times. 
\end{enumerate}
Then the code satisfies conditions 1. and 2. if and only if it satisfies condition 3. 
\end{theorem}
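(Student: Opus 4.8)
The plan is to translate all three conditions into statements about a single combinatorial quantity. For a source $s$ and a message $m$, let $\lambda_s(m) = |\{K \in \K : m \in e_K(s)\}|$ be the number of keys under which $m$ is the encoding of $s$. Two elementary counting facts carry most of the argument. First, since $e_K(s) \cap e_K(s') = \emptyset$ for $s \neq s'$, any key $K$ contributes to $\lambda_s(m)$ for at most one source $s$, so $|\kappa(m)| = \sum_{s \in \SSS} \lambda_s(m)$ for every message $m$. Second, summing over a fixed column counts the (key, message)-incidences in that column: each of the $b$ keys contributes $c$ messages, so $\sum_{m \in \M} \lambda_s(m) = bc$ for every source $s$.

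Next I would restate each condition in these terms. Condition 3 says $\lambda_s(m)$ is independent of $m$ for each fixed $s$; together with $\sum_{m} \lambda_s(m) = bc$ this forces the common value to be $bc/v$, so condition 3 is equivalent to ``$\lambda_s(m) = bc/v$ for all $s$ and $m$.'' For condition 1, Lemma \ref{PD0.lem} gives $P_{d_0} = ck/v$ if and only if $|\kappa(m)| = bck/v$ for all $m$, i.e.\ $\sum_{s \in \SSS} \lambda_s(m) = bck/v$ for all $m$. For condition 2, choosing a key uniformly and then an encoding uniformly gives $\Prob[m \mid s] = \lambda_s(m)/(bc)$; by Bayes' theorem, perfect secrecy ($\Prob[s \mid m] = \Prob[s]$) is equivalent to $\Prob[m \mid s] = \Prob[m]$ for all $s,m$ of positive probability, hence to ``$\lambda_s(m)/(bc)$ is independent of $s$'', i.e.\ $\lambda_s(m)$ depends only on $m$.

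With these reformulations the theorem is immediate. If condition 3 holds, then $\lambda_s(m) = bc/v$ is constant in $s$, which is condition 2, and $\sum_{s} \lambda_s(m) = k \cdot bc/v = bck/v$, which gives condition 1 via Lemma \ref{PD0.lem}. Conversely, if conditions 1 and 2 hold, write $\lambda_s(m) = \lambda(m)$ using condition 2; then condition 1 gives $k\,\lambda(m) = bck/v$, so $\lambda(m) = bc/v$ for every $m$, which is condition 3.

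I expect the only delicate point to be the treatment of perfect secrecy when the source distribution is not uniform and when some message may have probability $0$. The Bayes step needs $\Prob[m] > 0$, and passing from ``$\lambda_s(m)$ constant in $s$ over sources of positive probability'' to ``constant in $s$'' is clean if one assumes every source has positive probability (a harmless assumption); alternatively, under condition 1 a message with $\Prob[m]=0$ would have $\lambda_s(m)=0$ for all relevant $s$ and hence $|\kappa(m)| = 0 \neq bck/v$, a contradiction, so that case never occurs. I would absorb this observation into the proof to keep the main line uncluttered.
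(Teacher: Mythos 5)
Your proposal is correct and follows essentially the same route as the paper's proof: your $\lambda_s(m)$ is exactly the paper's $|\kappa(m,s)|$, and both arguments rest on the same two counting identities ($|\kappa(m)|=\sum_s|\kappa(m,s)|$ and $\sum_m|\kappa(m,s)|=bc$), Lemma \ref{PD0.lem}, and the Bayes reformulation of perfect secrecy. The only differences are cosmetic: you state the conditional probability with the correct splitting normalization $\Prob[m\mid s]=\lambda_s(m)/(bc)$ (the constant factor does not affect the equivalences), and you are slightly more explicit about the zero-probability caveat in the Bayes step.
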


\begin{proof}
For any $s \in \SSS$ and  $m \in \M$, define
\[ \kappa(m,s) = \{ K: m \in e_K(s) \} .\]
Thus $\kappa(m,s)$ contains the keys for which $m$ is a valid encoding of $s$.
We  observe that
\begin{equation}
\label{ms.eq}
\Prob [ m \mid  s]  = \frac{| \{ K  : m \in e_K(s)\} |}{b} = \frac{| \kappa(m,s) |}{b}. 
\end{equation}

\medskip

 Suppose the code satisfies 2. Perfect secrecy is achieved if and only if  \[\Prob [ s \mid  m] = \Prob [ s] \] for all $s \in \SSS$ and all $m \in \M$. 
By Bayes' Theorem, this is equivalent to proving 
\begin{equation}
\label{PS.eq}
\Prob [ m \mid  s] = \Prob [ m] 
\end{equation} for all $s \in \SSS$ and all $m \in \M$.

For a given message $m$,  equations (\ref{ms.eq}) and (\ref{PS.eq})  imply that $| \kappa(m,s_1) | = | \kappa(m,s_2) |$ for all sources
$s_1,s_2$. 

It is clear that 
\[ \kappa(m) = \bigcup_{s \in \SSS}  \kappa(m,s),\]
where the sets $\kappa(m,s)$ ($s \in \SSS$) are disjoint. 
Therefore
\begin{equation}
\label{kappasum.eq}
 |\kappa(m)| = \sum _{s \in \SSS} |\kappa(m,s)| = k \times |\kappa(m,s)|
 \end{equation}
for any fixed source $s \in \SSS$. 

Now, assume additionally that the code satisfies condition 1. From Lemma \ref{PD0.lem}, we have $|\kappa(m)| = bck/v$.
Hence,  it follows that
\begin{equation}
\label{kappa.eq}
|\kappa(m,s)| = \frac{bc}{v}
\end{equation}
for every $m \in \M$, $s \in \SSS$. Therefore, condition 3.\ holds.

\medskip

Conversely, suppose condition 3.\ holds. Let $s \in \SSS$.
Then $|\kappa(m_1,s)| = |\kappa(m_2,s)|$ for all $m_1,m_2 \in \M$.
We have
 \[ \sum _{m \in \M} |\kappa(m,s)| = bc,\]
since there are $b$ rows in the encoding matrix and each cell contains $c$ messages.  
Therefore, equation (\ref{kappa.eq}) holds 
 for all $s \in \SSS$ and all $m \in \M$.
 Hence, from (\ref{ms.eq}), 
 \[\Prob [ m \mid  s]  = \frac{ \frac{bc}{v} }{b} = \frac{c}{v}\]
 for all $s \in \SSS$ and all $m \in \M$.
 Then
\begin{eqnarray*}
\Prob [ m ]  &=& \sum_{s \in \SSS }  (\Prob [ m \mid  s] \times  \Prob [ s]) \\
& = & \sum_{s \in \SSS }  \left( \frac{c}{v} \times  \Prob [ s] \right) \\
& = & \frac{c}{v} ,
\end{eqnarray*}
so  
$\Prob [ m \mid  s] = \Prob [ m] $ for all $s \in \SSS$ and  $m \in \M$.
Therefore, equation (\ref{PS.eq}) holds and we have perfect secrecy.

To see that $P_{d_0} = ck/v$, we use equation (\ref{kappasum.eq}), which is satisfied because we have perfect secrecy. 
Since (\ref{kappa.eq}) holds , we have
\[ | \kappa(m) | = k \times \frac{bc}{v} = \frac{bck}{v} \] for all $m \in \M$. Then
$P_{d_0} = ck/v$ from Lemma \ref{PD0.lem}. 
\end{proof}

\subsection{Threshold Schemes}

A  $(2,2)$-threshold scheme enables a \emph{secret} $s$ to be ``split'' into two \emph{shares} $v_1$ and $v_2$ in such a way that 
\begin{enumerate}
\item $v_1$ and $v_2$ uniquely determine $s$ via a \emph{reconstruction function}.
We express this as $\Rec(v_1,v_2)  = s.$
\item No individual share yields any information about the secret. That is, 
 \[\Prob [ s \mid  v_1] = \Prob [ s \mid v_2] =\Prob[s].\]
\end{enumerate}

More generally, a $(k,n)$-threshold scheme enables a secret $s$ to be split into $n$ shares in such a way that any $k$ shares permit the secret to be reconstructed, but no set of $k-1$ or fewer shares yield any information about the secret. 

In a \emph{robust} $(2,2)$-threshold scheme, we consider the scenario where one player may modify their share, hoping that $\Rec$ will then yield an incorrect secret. So we consider a setting where
$\Rec$  either returns a secret or $\perp$, where the latter indicates that no secret can be reconstructed 
from the two given shares. Suppose that the first player, $P_1$, alters their share as $v_1 \rightarrow v_1'$ ($P_1$ does not have any information about the value of the other share, $v_2$). Suppose
$\Rec(v_1,v_2) =s$. Then $P_1$ \emph{wins} this deception game if \[\Rec(v_1',v_2) = s'\] where $s' \neq s$.
$P_1$ \emph{loses} the game if \[\Rec(v_1',v_2) = s \quad \text{or} \quad \Rec(v_1',v_2) = \; \perp .\]
Similarly, if $P_2$ alters their share as $v_2 \rightarrow v_2'$, then they win the deception game if $\Rec(v_1,v_2') = s'$ where $s' \neq s$.

A \emph{robust} $(2,2)$-threshold scheme is \emph{$\epsilon$-secure} if no strategy by $P_1$ or $P_2$ will allow  them to win the deception game with probability exceeding $\epsilon$.  

\subsection{Background and Our Contributions}
\label{background.sec}

There has been previous work, for example in \cite{OKSS,OKS}, discussing constructions for ``optimal'' authentication codes and robust (threshold) secret sharing schemes using combinatorial structures such as BIBDs, difference sets, external BIBDs (EBIBDs), external difference families (EDFs) and splitting BIBDs. 
In the context of authentication codes,  ``optimal'' means that the deception probabilities are as small as possible and the number of encoding rules (or keys) 
is also as small as possible. For a robust threshold scheme, ``optimal'' means that the deception probabilities meet a specified bound that is expressed in terms of the number of possible shares and the number of possible secrets. Additionally, \cite{OKSS,OKS} proved some partial converses showing that optimal authentication codes and robust secret sharing schemes imply the existence of some of the above-mentioned combinatorial structures. Without going into details, the following are the main results along this line:
\begin{itemize}
\item In \cite{OKSS} it is shown that a robust threshold scheme can be constructed from an EDF with 
$\lambda = 1$. (This construction incorporates a Shamir threshold scheme as an ingredient.) Conversely, certain robust threshold schemes give rise to certain EBIBDs.
\item In \cite{OKS} it is shown that a robust threshold scheme can be constructed from a difference set. (This construction also incorporates a Shamir threshold scheme as an ingredient.) Conversely, certain robust threshold schemes give rise to certain SBIBDs.
\item In \cite{OKSS}, a construction is given for splitting authentication codes from EDFs with $\lambda = 1$. This paper also constructs certain authentication codes from splitting BIBDs with $\lambda =1$, as well as proving a converse result.
\end{itemize}

The above results suggest that there are connections between authentication codes and robust secret sharing schemes, as they are closely related to similar (and sometimes identical) types of designs. For example, 
the combinatorial designs generated by EDFs (which include difference sets as a special case) can be used to construct both robust threshold schemes and authentication codes.

In this paper, we show a ``direct'' equivalence between certain authentication codes and robust secret sharing schemes. We also study a key-substitution attack for authentication codes and interpret it in light of what we term ``dual authentication codes.''

Robust $(k,n)$-threshold schemes were introduced by Tompa and Woll \cite{TW}.
They have been constructed in the past by a two-step process: 
First, the secret is ``encoded'' using a suitable combinatorial structure such as a difference set \cite{OKS}, 
EDF \cite{OKSS} or AMD code \cite{Cramer}. Second, the encoded secret is shared using a traditional Shamir threshold scheme. However, if we consider a $(2,2)$-threshold scheme, then the second step is not required and consequently we can show a direct equivalence between authentication codes and $(2,2)$-threshold schemes.

The rest of this paper is organized as follows.
In Section \ref{equiv.sec}, we prove our main equivalence result. 
Constructions for authentication codes that satisfy the necessary hypotheses are studied in Section \ref{constructions.sec}.
The notion of ``dual authentication codes'' is introduced and explored in Section \ref{dual.sec}. Finally, some closing remarks are given in Section \ref{summary.sec}.

\section{Equivalences}
\label{equiv.sec}

In the next subsections, we show the equivalence of certain authentication codes and robust $(2,2)$-threshold schemes.

\subsection{Threshold Scheme to Authentication Code}

Given an $\epsilon$-secure robust $(2,2)$-threshold scheme, we construct an authentication code.
This is somewhat similar to the the construction used by Kurosawa, Obana and Ogata in \cite[Theorem 15]{KOO}.

For any ordered pair of shares $(v_1,v_2)$ such that $\Rec(v_1,v_2)  = s$, define $v_2 \in e_{v_1}(s)$.
First, we note that $e_{v_1}(s) \neq \emptyset$ for all $v_1$ and all $s$. This holds because the share $v_1$ 
does not provide any information about the secret. Hence, for all choices of $v_1$ and $s$, there must be at least one value $v_2$ such that $\Rec(v_1,v_2)  = s$.

The probability distribution on the sources in the authentication code 
should be the same as the probability distribution on the shares of the threshold scheme. 
Also, note the following correspondences: 
\begin{center}
\begin{tabular}{ccc}
threshold scheme & & authentication code\\ \hline
source $s$ & $\longleftrightarrow$ & secret $s$\\
share $v_1$ & $\longleftrightarrow$ & key $K$\\
share $v_2$ & $\longleftrightarrow$ & message $m$.
\end{tabular}
\end{center}
We show that the resulting authentication scheme satisfies various properties now.

\medskip

\noindent {\bf Message-substitution attack.}  Suppose an adversary replaces $m \in e_{K}(s)$ 
with $m' \neq m$ in the authentication code. 
This corresponds to modifying share $v_2$ (the second share)  from $m$ to $m'$ in the robust secret sharing scheme. 
Because the threshold scheme is robust, we know that 
\[\Prob[\Rec(K,m')  = s' \neq s] \leq \epsilon.\]
In other words, \[\Prob [m' \in e_{K}(s') \text{ and } s' \neq s] \leq \epsilon.\]
Therefore, the probability of a successful message-substitution attack is at most $\epsilon$.

\medskip

\noindent {\bf Key-substitution attack.}  Suppose an adversary replaces $K$ 
with $K' \neq K$ in the authentication code, where $m \in e_K(s)$.
This corresponds to modifying share $v_1$ (the first share) from $K$ to $K'$ in the robust secret sharing scheme. 
Because the threshold scheme is robust, we know that \[\Prob[\Rec(K',m)  = s' \neq s] \leq \epsilon.\]
In other words, \[\Prob [m \in e_{K'}(s') \text{ and } s' \neq s] \leq \epsilon.\]
Therefore, the probability of a successful key-substitution attack is at most $\epsilon$.

\medskip

\noindent {\bf Perfect Secrecy.} The threshold scheme has the property that one share yields no information about the value of the secret. Therefore, in particular, 
\[ \Prob [ s \mid v_2] =\Prob[s].\]
Suppose the share $v_2$ is fixed but we have no information about the share $v_1$.
Then we have  no information about the secret $s$. In the corresponding authentication code, this means 
that the message $m=v_2$ provides no information about the source $s$ when the key $K=v_1$ is not known, so we have perfect secrecy.

\medskip

It is also possible to construct authentication codes with similar properties from any robust 
$(k,n)$-threshold scheme with $k \geq 2$. For example, see \cite{OKS}. 
The idea is to fix shares for the first $k-2$ players, say, by choosing some $(k-2)$-tuple of shares that occurs with probability greater than $0$. Consider the subset of distribution rules such that the first $k-2$ shares take on the specified values. Retain the shares for the next two players, but throw away the  shares that would be given to the last $n-k$ players. This gives rise to a $(2,2)$-threshold scheme, which can then be used to construct an authentication code using the above-described technique.

\subsection{Authentication Code to Threshold Scheme}

The construction in the previous subsection can easily be reversed.  Now we start with an authentication code having perfect secrecy and we assume that message-substitution and key-substitution attacks have success probability at most $\epsilon$. We construct a $(2,2)$-threshold scheme as follows:  shares for $P_1$ are keys in the authentication code, shares for $P_2$ are messages in the authentication code, and secrets are sources in the authentication code.  Note that $P_1$ and $P_2$ have shares of the same size if and only if the number of keys is the same as the number of messages (in the authentication code).

For every $m \in e_K(s)$, construct a distribution rule $(K,m;s)$, i.e., $v_1 = K$, $v_2 = m$ and \[\Rec(v_1,v_2)  = \Rec(K,m)  = s.\] We need to show that the resulting set of distribution rules defines an 
$\epsilon$-secure $(2,2)$-threshold scheme.

\medskip

\noindent {\bf Secret reconstruction.}  Suppose that we have two distribution rules $(K,m;s)$ and $(K,m;s')$ with $s' \neq s$.  Then $m \in e_K(s) \cap e_K(s')$ in the authentication code, which is not allowed. Thus, two shares determine at most one secret.

\medskip

\noindent {\bf Information revealed by one share.}   We want to prove that 
\[ \Prob [ s \mid v_1] = \Prob [ s \mid v_2] = \Prob[s].\]
If $v_1 = K$ is given, then this yields no information about $s$ because $K$ and $s$ are independent in the authentication code. If $v_2 = m$ is given, then this yields no information about $s$ because the authentication code has perfect secrecy.

\medskip

\noindent {\bf Modifying $v_1$.}  Suppose  $P_1$ replaces their share $v_1= K$ 
with $v_1' = K' \neq K$. 
This corresponds to a key-substitution attack in the authentication code. We know that
\[\Prob [m \in e_{K'}(s') \text{ and } s' \neq s] \leq \epsilon,\] so 
\[\Prob[\Rec(K',m)  = s' \neq s] \leq \epsilon.\]

\medskip

\noindent {\bf Modifying $v_2$.}  Suppose  $P_2$ replaces their share $v_2= m$ 
with $v_2' = m' \neq m$. 
This corresponds to a message-substitution attack in the authentication code. We know that
\[\Prob [m' \in e_{K}(s') \text{ and } s' \neq s] \leq \epsilon,\] so 
\[\Prob[\Rec(K,m')  = s' \neq s] \leq \epsilon.\]

\subsection{Main Theorem}

Summarizing the results in the two previous subsections, we have our main equivalence theorem. For simplicity, we assume equiprobable distributions of sources (in the authentication code) and secrets (in the threshold scheme).

\begin{theorem} 
\label{main.thm}
There exists an authentication code with perfect secrecy for $k$ uniformly distributed sources that is $\epsilon$-secure against message-substitution and key-substitution attacks if and only if there exists
an $\epsilon$-secure $(2,2)$-threshold scheme for $k$ uniformly distributed secrets.
\end{theorem}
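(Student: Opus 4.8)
The plan is to prove the two implications of the biconditional separately, in each case invoking the construction described in the two preceding subsections and then checking that the uniformity hypothesis on sources (respectively secrets) transfers correctly. Since the substantive content is precisely those two constructions, the proof of the theorem itself amounts to assembling them, doing a short amount of probability bookkeeping, and observing that the two constructions are mutually inverse.

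For the forward implication, I would take an authentication code with perfect secrecy, equiprobable keys, and $k$ uniformly distributed sources that is $\epsilon$-secure against message- and key-substitution attacks, and apply the construction of the ``Authentication Code to Threshold Scheme'' subsection: the shares of $P_1$ are the keys, the shares of $P_2$ are the messages, the secrets are the sources, and for each $m \in e_K(s)$ we create a distribution rule $(K,m;s)$ carrying probability $\Prob[K]\,\Prob[s]\,\Prob[m \mid K,s] = 1/(bkc)$. The conditions to verify are exactly those listed there: $\Rec$ is well defined since $e_K(s) \cap e_K(s') = \emptyset$ for $s \neq s'$; a single share reveals nothing, for $v_1 = K$ by independence of keys and sources and for $v_2 = m$ by perfect secrecy; modifying $v_1$ is a key-substitution attack and hence succeeds with probability at most $\epsilon$; and modifying $v_2$ is a message-substitution attack and hence succeeds with probability at most $\epsilon$. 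The only extra point is that the induced secret distribution is uniform: for a fixed source $s$ there are exactly $bc$ pairs $(K,m)$ with $m \in e_K(s)$, so $\Prob[s] = bc \cdot 1/(bkc) = 1/k$.

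For the converse, I would take an $\epsilon$-secure $(2,2)$-threshold scheme with $k$ uniformly distributed secrets and apply the construction of the ``Threshold Scheme to Authentication Code'' subsection: keys are first shares, messages are second shares, sources are secrets, and $v_2 \in e_{v_1}(s)$ exactly when $\Rec(v_1,v_2) = s$. Here one first records that $e_{v_1}(s) \neq \emptyset$ for every $v_1$ and $s$ --- which holds because $v_1$ alone gives no information about the secret, so every secret is consistent with every first share --- ensuring that each key determines an encoding rule. Perfect secrecy then follows from $\Prob[s \mid v_2] = \Prob[s]$; a message-substitution attack is a modification of $v_2$ and hence succeeds with probability at most $\epsilon$; a key-substitution attack is a modification of $v_1$ and hence succeeds with probability at most $\epsilon$; and since sources are secrets, the $k$ sources are uniformly distributed. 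Tracing both constructions in succession shows they are inverse to one another, so the correspondence is in fact a bijection.

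I expect the main obstacle to be reconciling the distribution-related conventions: the paper assumes equiprobable keys and message encodings throughout, whereas a general $(2,2)$-threshold scheme need not have equiprobable shares. The cleanest way around this is to observe that perfect secrecy and the two substitution success probabilities depend only on the conditional structure of the scheme --- the sets $e_K(s)$ and the quantities $\kappa(m,s)$, $\kappa(m)$ appearing in Lemma~\ref{PD0.lem} and Theorem~\ref{secrecy.lem} --- and not on the marginal distribution of keys, so in the threshold-to-code direction one may impose equiprobable keys (and, under splitting, equiprobable encodings) without disturbing $\epsilon$-security, perfect secrecy, or the secret distribution; equivalently, one may normalise the threshold scheme to have equiprobable shares before applying the construction. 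Once this point is settled, the remaining steps are the routine verifications above.
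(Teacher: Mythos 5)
Your proposal is correct and follows essentially the same route as the paper: Theorem~\ref{main.thm} is proved there simply by summarizing the two constructions of Section~\ref{equiv.sec} (threshold scheme $\rightarrow$ authentication code and authentication code $\rightarrow$ threshold scheme), which is exactly what you assemble. Your additional bookkeeping on the uniformity of the induced secret/source distributions and on normalising the share distributions is extra care the paper handles implicitly via its standing assumption of equiprobable keys and message encodings, and it does not change the argument.
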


\section{Combinatorial Constructions}
\label{constructions.sec}

In this section, we look at various constructions for authentication codes that are based on combinatorial designs, paying particular attention to the properties (namely, perfect secrecy and key-substitution attacks) that are relevant for the construction of robust $(2,2)$-threshold schemes using Theorem \ref{main.thm}.  Throughout this section, we assume standard design-theoretic definitions that can be found, for example,  in \cite{HCD}.

\subsection{Symmetric BIBDs}
\label{SBIBDs.sec}


First, we give a simple construction using symmetric BIBDs (i.e., SBIBDs).
This is a slight generalization of constructions given in \cite{OKSS,OKS} since we do not require that the SBIBD is generated from a difference set.
 
Suppose that $(X, \B)$ is a $(v,k, \lambda)$-SBIBD (so $\lambda(v-1) = k(k-1)$). 
Suppose that $X = \{x_i : 1 \leq i \leq v\}$ is the set of points in the design and
$\B = \{B_j: 1 \leq j \leq v\}$ is the set of blocks in the design. 
We can order each block $B_j$ to obtain a $k$-tuple
$C_j = (c_{1,j}, \dots , c_{k,j})$ in such a way that the following property is satisfied:
\[ | \{j :  c_{\ell,j} \} = x_i| = 1 \] for every $i$, $1 \leq i \leq v$, and every $\ell$, $1 \leq \ell \leq k$. 
That is, we can write out the ordered blocks $C_j$ ($1 \leq j \leq v$) as the rows of a $v$ by $k$ array $E$ in such a way that every point occurs once in each column of the array $E$.  Such an array is known as a \emph{Youden square}; see, for example, \cite[\S VI.65]{HCD}.

A Youden square can be constructed from any SBIBD by using systems of distinct representatives. However, in the case where the SBIBD is generated from a difference set in an abelian group $G$, the Youden square occurs automatically if we arbitrarily order the base block and then generate the rest of the (ordered) blocks by developing the base block through the group $G$.

Suppose we use $E$ as an encoding matrix for an authentication code.
Thus, a key corresponds to a block in the design, or equivalently a row in $E$. The $k$ sources are the $k$ columns in $E$ and the messages are the $v$ points in the design. We assume that the sources are equiprobable.

It is not difficult to verify that this authentication code 
is $(k-1)/(v-1)$-secure against message-substitution and key-substitution attacks
(see the proof of Theorem \ref{BIBD.thm} for additional detail). It is also clear that this authentication code provides perfect secrecy; this follows immediately from Theorem \ref{secrecy.lem} using the ``Youden square'' property of the authentication matrix. 
This construction is in fact a special case of \cite[Theorem 5.5]{OKSS}, extended to include the perfect secrecy property by using an appropriate ordering of the blocks, as described above. 

Starting with this authentication code, we obtain from Theorem \ref{main.thm} an $\epsilon$-secure $(2,2)$-threshold scheme for $k$ equiprobable secrets, where $\epsilon = (k-1)/(v-1)$.
Summarizing, we have the following theorem.

\begin{theorem}
\label{SBIBD}
If there exists a $(v,k,\lambda)$-SBIBD, then there exists 
\begin{enumerate}
\item an authentication code with perfect secrecy for $k$ equiprobable sources that is $(k-1)/(v-1)$-secure against message-substitution and key-substitution attacks, and 
\item a $(k-1)/(v-1)$-secure $(2,2)$-threshold scheme for $k$ equiprobable secrets, in which the share sets for both players have size $v$.
\end{enumerate}
\end{theorem}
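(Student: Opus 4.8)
The plan is to carry out in detail the construction sketched in the paragraphs preceding the statement, making the three claimed properties of the authentication code precise, and then to invoke Theorem~\ref{main.thm} to obtain the threshold scheme. So there are really two tasks: (a) produce the authentication code of part~1, and (b) observe that part~2 is then immediate.

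For task (a) I would first justify that the Youden square $E$ exists. Given the $(v,k,\lambda)$-SBIBD $(X,\B)$, the goal is to order each block into a $k$-tuple so that every point appears exactly once in each of the $k$ columns. This is a system-of-distinct-representatives argument done one column at a time: since an SBIBD is symmetric, every point lies in exactly $r=k$ blocks, so the bipartite "points versus blocks still needing an entry in column $\ell$" graph stays regular as we fill columns, and Hall's condition (equivalently, K\H{o}nig's theorem) guarantees a perfect matching at each stage. (If the SBIBD comes from a difference set, cyclic development of an ordered base block gives $E$ directly.) Take $E$ as the encoding matrix, with keys indexing the $v$ rows/blocks, the $k$ sources indexing columns, the $v$ messages being the points, sources equiprobable, and the key distribution uniform on the $v$ blocks.

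Next I would verify the three relevant properties. \emph{Perfect secrecy:} the Youden property says each of the $v$ points occurs exactly once in every column of $E$, so condition~3 of Theorem~\ref{secrecy.lem} holds; hence conditions~1 and~2 hold, and in particular the code has perfect secrecy. \emph{Message substitution:} conditioning on an observed message $m$, the posterior distribution on the key is uniform over the $k=r$ blocks containing $m$; a replacement $m'\neq m$ wins exactly when the true block also contains $m'$, and since distinct points of a block sit in distinct columns, the decoded source is then automatically different. Because any two points of an SBIBD lie in exactly $\lambda$ common blocks, the success probability equals $\lambda/k=(k-1)/(v-1)$ (using $\lambda(v-1)=k(k-1)$), independent of the choice of $m'$. \emph{Key substitution:} after seeing key $K$, the unknown message is uniform over the $k$ points of the block $B_K$; replacing $K$ by $K'\neq K$ can succeed only if the message lies in $B_K\cap B_{K'}$, which has size exactly $\lambda$ since the SBIBD is symmetric, so the success probability is at most $\lambda/k=(k-1)/(v-1)$. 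This establishes part~1.

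Task (b) is then just an application of Theorem~\ref{main.thm}: an authentication code with perfect secrecy for $k$ equiprobable sources that is $\epsilon$-secure against message- and key-substitution attacks with $\epsilon=(k-1)/(v-1)$ yields an $\epsilon$-secure $(2,2)$-threshold scheme for $k$ equiprobable secrets, in which $P_1$'s shares are the $v$ keys and $P_2$'s shares are the $v$ messages, so both share sets have size $v$. I expect the only genuinely delicate point to be the key-substitution estimate: one must notice that the argument only gives the inequality $\le\lambda/k$ (equality would require that no common point of $B_K$ and $B_{K'}$ occupies the same column in both ordered blocks), but since $\epsilon$-security is an upper-bound notion this is exactly what is needed. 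The SDR existence argument is standard, and everything else is bookkeeping with the identity $\lambda(v-1)=k(k-1)$.
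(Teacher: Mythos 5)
Your proposal is correct and follows essentially the same route as the paper: build the Youden square from the SBIBD (via SDRs), get perfect secrecy from condition~3 of Theorem~\ref{secrecy.lem}, bound the message- and key-substitution probabilities by the $\lambda$-intersection property together with $\lambda(v-1)=k(k-1)$, and then apply Theorem~\ref{main.thm}. One small remark: the ``delicate point'' you flag is in fact automatic --- the Youden property means each point occurs exactly once per column over all rows, so a common point of $B_K$ and $B_{K'}$ necessarily sits in different columns of the two ordered blocks, and the key-substitution probability is exactly $\lambda/k=(k-1)/(v-1)$, though as you note the upper bound alone suffices.
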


\begin{example}
\label{731.exam}
{\rm A $(7,3,1)$-SBIBD is just a projective plane of order 2, often called the Fano plane. The seven blocks in the design can be obtained from the base block $\{0,1,3\}$ by developing it in the group $\zed_7$. After ordering the blocks appropriately, we obtain the following Youden square.

\[
\begin{array}{|c|c|c|}
\hline
s_1 & s_2 & s_3 \\ \hline 
0 & 1 & 3\\ \hline
1 & 2 & 4\\ \hline
2 & 3 & 5\\ \hline
3 & 4 & 6\\ \hline
4 & 5 & 0\\ \hline
5 & 6 & 1\\ \hline
6 & 0 & 2\\ \hline
\end{array}
\]

This Youden square is the encoding matrix for an authentication code  with perfect secrecy having $P_{d_0} = 3/7$ and $P_{d_1} = 2/6 = 1/3$. The success probability of any key-substitution attack is also $1/3$. For example, if $K_1$ is replaced by $K_2$, then the attack succeeds if and only if $m=1$. The probability that $m=1$ (given that $K_1$ is the key) is $1/3$ because the sources are equiprobable.

The corresponding $(2,2)$-threshold scheme is $(1/3)$-secure and has the following $21$ distribution rules:
\[
\begin{array}{c@{\hspace{.5in}}c@{\hspace{.5in}}c}
\begin{array}{|c|c|c|}
\hline
v_1 & v_2 & s \\ \hline 
0 & 0 & s_1\\ \hline
1 & 1 & s_1\\ \hline
2 & 2 & s_1\\ \hline
3 & 3 & s_1\\ \hline
4 & 4 & s_1\\ \hline
5 & 5 & s_1\\ \hline
6 & 6 & s_1\\ \hline
\end{array}
&
\begin{array}{|c|c|c|}
\hline
v_1 & v_2 & s \\ \hline 
0 & 1 & s_2\\ \hline
1 & 2 & s_2\\ \hline
2 & 3 & s_2\\ \hline
3 & 4 & s_2\\ \hline
4 & 5 & s_2\\ \hline
5 & 6 & s_2\\ \hline
6 & 0 & s_2\\ \hline
\end{array}
&
\begin{array}{|c|c|c|}
\hline
v_1 & v_2 & s \\ \hline 
0 & 3 & s_3\\ \hline
1 & 4 & s_3\\ \hline
2 & 5 & s_3\\ \hline
3 & 6 & s_3\\ \hline
4 & 0 & s_3\\ \hline
5 & 1 & s_3\\ \hline
6 & 2 & s_3\\ \hline
\end{array}
\end{array}
\]
Any deception carried out by $P_1$ or $P_2$ succeeds with probability $1/3$. For example, suppose $v_1 \rightarrow v_1' = v_1 + 1 \bmod 7$. This deception will succeed if and only if $s = s_2$. In this case, $v_2 = v_1 + 1 \bmod 7$ and then $\Rec(v_1',v_2) = s_1$. The success probability of this deception is $\Prob[s  = s_2] = 1/3$ because the sources are equiprobable.
}
\end{example}

We should note that the authentication codes and robust threshold schemes obtained from Theorem \ref{SBIBD} are optimal in various senses. In the case of the authentication code, the impersonation and message-substitution attacks have success probability that is as small as possible, according to Massey's bounds \cite{Mas}.
Also, the number of encoding rules (or keys) is as small as possible, from \cite[Theorem 2.1]{RS}.
 
For the threshold schemes, we have $v$ possible shares, $k$ possible secrets, and 
the scheme is $\epsilon$-secure where $\epsilon = (k-1)/(v-1)$. This meets the bound proven in 
\cite[Corollary 3.3]{OKS}. In fact, as a result of our discussion above, we have shown the following strong characterization of these ``optimal''  $(2,2)$-threshold schemes. 

\begin{theorem}
\label{iff.thm}
There exists a $(v,k,\lambda)$-SBIBD if and only if there exists a $(k-1)/(v-1)$-secure $(2,2)$-threshold 
scheme for $k$ equiprobable secrets.
\end{theorem}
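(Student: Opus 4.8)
The plan is to prove the two implications separately; only the converse requires any real work. The forward implication is immediate from part~2 of Theorem~\ref{SBIBD}: a $(v,k,\lambda)$-SBIBD already yields a $(k-1)/(v-1)$-secure $(2,2)$-threshold scheme for $k$ equiprobable secrets. For the converse, suppose we are given such a scheme in which each player has $v$ possible shares (this common share size is what the security parameter refers to, cf.\ Theorem~\ref{SBIBD}). Running the construction behind Theorem~\ref{main.thm} produces an authentication code with perfect secrecy for $k$ equiprobable sources, having $v$ keys and $v$ messages, that is $(k-1)/(v-1)$-secure against both message-substitution and key-substitution attacks; the goal is to extract a $(v,k,\lambda)$-SBIBD from this code.

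The first substantive step is to pin down the structure of the encoding matrix $E$. By Theorem~\ref{bounds.thm}, the message-substitution probability satisfies $P_{d_1}\ge c(k-1)/(v-1)$, where $c$ is the amount of splitting, so $c=1$; and then the $c=1$ case of Theorem~\ref{bounds.thm} (Massey's bound) forces $P_{d_1}=(k-1)/(v-1)$ exactly. Next, exactly as in the proof of Theorem~\ref{secrecy.lem}, perfect secrecy with equiprobable sources gives $|\kappa(m,s)| = v\,\Prob[m\mid s] = v\,\Prob[m]$, which is independent of $s$, and hence $|\kappa(m)| = k\,|\kappa(m,s)|$. Since every message is a valid encoding under at least one key we have $|\kappa(m)|\ge 1$, so the integer $|\kappa(m,s)|$ is at least $1$; and since each of the $v$ keys encodes a fixed source $s$ as exactly one message, $\sum_{m}|\kappa(m,s)|=v$. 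As this is a sum of $v$ positive integers equal to $v$, every term is $1$: each message occurs exactly once in each column of $E$, so $E$ is a Youden square. Reading the messages as points and the rows of $E$ as blocks, we obtain a configuration with $v$ points, $v$ blocks of size $k$, and every point in exactly $k$ blocks.

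It remains to show this configuration is balanced on pairs, and here the exact value $P_{d_1}=(k-1)/(v-1)$ does the job. Because $E$ has no splitting and is a Youden square, for $m\neq m'$ the substitution replacing $m$ by $m'$ succeeds precisely when the unknown key lies in $\kappa(m)\cap\kappa(m')$; and conditioned on $m$ having been observed, the key is uniform over $\kappa(m)$, which has size $k$. So this substitution succeeds with probability $|\kappa(m)\cap\kappa(m')|/k$, and $P_{d_1}=(k-1)/(v-1)$ gives $\max_{m\neq m'}|\kappa(m)\cap\kappa(m')|=k(k-1)/(v-1)$. On the other hand, counting incidences, $\sum_{m'\neq m}|\kappa(m)\cap\kappa(m')|=\sum_{K\in\kappa(m)}(k-1)=k(k-1)$, so the average of $|\kappa(m)\cap\kappa(m')|$ over the $v-1$ messages $m'\neq m$ also equals $k(k-1)/(v-1)$. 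A maximum equal to the average must be attained everywhere, so every pair of points lies in exactly $\lambda:=k(k-1)/(v-1)$ blocks; since $\lambda(v-1)=k(k-1)$, this is a $(v,k,\lambda)$-SBIBD, which completes the converse.

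The step I expect to be the main obstacle is the Youden-square argument: the forward direction and the closing double count are routine, whereas forcing $|\kappa(m,s)|=1$ is the place where perfect secrecy, the equiprobability of sources and keys, the absence of splitting, and the assumption that both share sets have size exactly $v$ all have to be combined. A subsidiary point that needs care is that the authentication code produced by Theorem~\ref{main.thm} is not a priori ``$c$-splitting'' for a single constant $c$; one should therefore either first observe that perfect secrecy with equiprobable sources forces the cells of $E$ to be equicardinal, or else phrase the bound argument so that even a single oversized cell of $E$ would contradict $P_{d_1}=(k-1)/(v-1)$.
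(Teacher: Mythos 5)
Your forward direction is exactly the paper's (it is part~2 of Theorem~\ref{SBIBD}), and your closing double count --- security forces $\max_{m'\neq m}|\kappa(m)\cap\kappa(m')|$ to equal its average $k(k-1)/(v-1)$, hence all pair-intersections are constant --- is a correct and attractive way to finish \emph{once} the encoding matrix is known to be a $v\times k$ array of singleton cells in which every message occurs once per column and, given the observed message $m$, the key is uniform on the $k$-set $\kappa(m)$. Your added hypothesis that each player's share set has size $v$ is also genuinely needed (read literally the converse would fail: a $(49,7,1)$-BIBD yields a $1/7$-secure scheme for $7$ equiprobable secrets via Theorem~\ref{BIBD1.thm}, while no $(43,7,1)$-SBIBD, i.e.\ projective plane of order $6$, exists), and it matches the paper's intended reading. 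Note that the paper does not actually write out a converse argument: it deduces the theorem from Theorem~\ref{SBIBD} together with the optimality bound and the converse characterization it cites from \cite{OKS}, so your self-contained combinatorial extraction is a different route --- but it is not yet a proof.

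There are two genuine gaps. First, the step you yourself flag is unresolved: Theorem~\ref{bounds.thm} applies only to codes with \emph{constant} splitting $c$, and your first proposed repair (``perfect secrecy with equiprobable sources forces the cells of $E$ to be equicardinal'') is false as stated --- with two sources and two equiprobable keys, $e_{K_1}(s_1)=\{m_1\}$, $e_{K_1}(s_2)=\{m_2,m_3\}$, $e_{K_2}(s_1)=\{m_2,m_3\}$, $e_{K_2}(s_2)=\{m_1\}$ gives perfect secrecy with unequal cells. So splitting must be excluded using the security hypothesis itself, via a Massey-type bound valid for non-constant cell sizes and for the distributions actually inherited from the threshold scheme; that argument is not supplied. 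Second, and more fundamentally, every counting identity you invoke --- $\Prob[m\mid s]=|\kappa(m,s)|/v$, $\sum_m|\kappa(m,s)|=v$, and ``conditioned on $m$ the key is uniform on $\kappa(m)$, which has size $k$'' --- presupposes that the derived code has $v$ \emph{equiprobable} keys and equiprobable encoding, i.e.\ that $P_1$'s share is uniformly distributed and the distribution rules are used uniformly. Nothing in the definition of a $(k-1)/(v-1)$-secure $(2,2)$-threshold scheme guarantees this: Theorem~\ref{main.thm} merely transports whatever distributions the scheme has. You must either add these uniformity assumptions as hypotheses (consistent with the paper's standing conventions) or derive them from optimality, which is precisely the nontrivial content of the characterization in \cite{OKS} that the paper leans on. Until both points are settled, the Youden-square structure, and hence the SBIBD, has not been established for an arbitrary scheme satisfying your stated hypotheses.
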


\subsection{BIBDs}

More generally, we can use any BIBD (i.e, not necessarily a symmetric BIBD) to construct an authentication code.
It has also been shown that the resulting authentication codes can provide 
perfect secrecy if obvious numerical conditions are satisfied; for example, see \cite[Theorem 6.4]{St}.
Here is a ``classical'' construction of authentication codes from BIBDs.

\begin{theorem} 
\label{BIBD.thm}
Suppose there is a $(v,b,r,k,\lambda)$-BIBD where $r \equiv 0 \bmod k$. Then there is an authentication code for $k$ equiprobable sources, having $v$ messages and 
$b$ equiprobable keys, which satisfies the following properties:
\begin{enumerate}
\item $P_{d_0} = k/v$ and $P_{d_1} = (k-1)/(v-1)$, 
\item the code provides perfect secrecy, and
\item if $r=k$, then the optimal key-substitution attack has success probability $(k-1)/(v-1)$, and if $\lambda = 1$, then the optimal key-substitution attack has success probability $1/k$.
\end{enumerate}
\end{theorem}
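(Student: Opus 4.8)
The plan is to write down the standard ``generalized Youden'' encoding matrix built from the design and then read off properties 1--3, with the key-substitution bound of part~3 being the piece that needs real work. First I would construct a $b \times k$ array $E$ whose rows are the blocks of the design, each written out as an ordered $k$-tuple, in such a way that every point occurs exactly $r/k$ times in each column of $E$. Such an ordering exists: building $E$ one column at a time, choosing the $\ell$-th column amounts to picking one point from each not-yet-exhausted block so that each point is chosen exactly $r/k$ times, which is a feasible transportation problem (its fractional relaxation puts weight $1/(k-\ell+1)$ on each remaining incidence and the underlying polytope is integral); equivalently, one wants an equitable $k$-edge-colouring of the point--block incidence graph, which exists by a standard Hall-type argument and reduces to the Youden square construction of Section~\ref{SBIBDs.sec} when $r=k$. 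Now take the rows of $E$ as the $b$ equiprobable keys, the columns as the $k$ equiprobable sources, and the points as the $v$ messages; the code has no splitting.

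Parts~1 and~2 follow from Theorem~\ref{secrecy.lem}: by construction every message occurs $r/k$ times in each column of $E$, so condition~3 of that theorem holds, whence $P_{d_0}=k/v$ and the code has perfect secrecy. For $P_{d_1}$ I would argue directly. Given the observed message $m$, the posterior on the unknown key is uniform on $\kappa(m)$ (keys are uniform and each $e_K$ restricts to a bijection onto $B_K$), and $|\kappa(m)|=r$ since $m$ lies in $r$ blocks. A substitution of $m$ by $m'\neq m$ succeeds iff the true key lies in $\kappa(m)\cap\kappa(m')$ (the requirement $s'\neq s$ is automatic since there is no splitting), and $|\kappa(m)\cap\kappa(m')|$ is the number of blocks containing both points $m$ and $m'$, namely $\lambda$. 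Hence every replacement wins with probability $\lambda/r$, so $P_{d_1}=\lambda/r=(k-1)/(v-1)$ by the identity $\lambda(v-1)=r(k-1)$.

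The substance is part~3. Fix the correct key $K$ (the ordered block $B_K$) and a candidate replacement $K'$. With the secret chosen uniformly over the $k$ sources, the transmitted message is $E(K,s)$ for a uniformly random column $s$, and the attack $K\to K'$ succeeds exactly when $E(K,s)\in B_{K'}$ and $E(K,s)$ sits in a column of row $K'$ other than $s$; writing $W(K,K')$ for the set of such columns, the attack wins with probability $|W(K,K')|/k$, and $|W(K,K')|\leq|B_K\cap B_{K'}|$ since the entries of row $K$ are exactly the $k$ points of $B_K$. Now split on the two hypotheses. If $\lambda=1$ the design is a Steiner system, so $|B_K\cap B_{K'}|\leq 1$ and every attack wins with probability at most $1/k$; this is attained by taking any $x\in B_K$, say in column $s_0$ of row $K$, and any block $B_{K'}\neq B_K$ through $x$ with $x$ in a column $\neq s_0$ (there are $r-r/k\geq k-1\geq 1$ of these, recall $r\geq k$), for which $W(K,K')=\{s_0\}$ exactly. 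If instead $r=k$ the design is symmetric, so $|B_K\cap B_{K'}|=\lambda$ for all $K'\neq K$ and hence $|W(K,K')|\leq\lambda$; to see this is tight I would average over $K'$: for a fixed column $s$, the point $x=E(K,s)$ occurs once in column $s$ among the $r=k$ blocks through it (namely in $B_K$), so it lies in a column $\neq s$ in each of the other $k-1$ of them, giving $\sum_{K'\neq K}|W(K,K')|=k(k-1)$; dividing by $b-1=v-1$ shows the average of $|W(K,K')|$ is $k(k-1)/(v-1)=\lambda$. Since each term is at most $\lambda$, every $K'\neq K$ in fact has $|W(K,K')|=\lambda$, so the optimal (indeed every) key-substitution attack wins with probability $\lambda/k=(k-1)/(v-1)$.

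The first obstacle is simply justifying the balanced ordering of the blocks --- routine but not trivial, needing Hall's theorem (or equitable edge colouring) and an inductive peeling of columns. The second, and the real point, is recognizing \emph{why} the hypotheses $\lambda=1$ and $r=k$ are singled out: these are exactly the cases in which the block-intersection pattern is rigid (pairwise intersections $\leq 1$, respectively constantly $\lambda$), which is what lets one both bound and attain the optimal key-substitution probability, the attainment in the symmetric case coming from the averaging identity above. For intermediate parameters the key-substitution probability genuinely depends on which balanced ordering is used, so no comparably clean formula should be expected.
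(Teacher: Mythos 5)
Your proposal is correct and follows essentially the same route as the paper: construct the balanced (generalized Youden) ordering so each point occurs $r/k$ times per column, invoke Theorem~\ref{secrecy.lem} for perfect secrecy and $P_{d_0}$, and analyze the key-substitution attack by counting common points of two blocks lying in distinct columns, splitting into the cases $r=k$ and $\lambda=1$. The only real divergences are minor: you prove the $P_{d_1}$ claim explicitly via the uniform posterior on $\kappa(m)$ (the paper leaves this classical fact implicit), and in the symmetric case you establish $|W(K,K')|=\lambda$ by an averaging argument, whereas the paper gets it directly from the observation that, since each point occurs exactly once per column when $r=k$, a common point of $B_K$ and $B_{K'}$ necessarily occupies different columns in the two rows.
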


\begin{proof}
First, we order each block in such a way that each element occurs exactly $r/k$ times in each position.
To do this, the technique used in the proof of \cite[Theorem 6.4]{St} can be applied (the proof of \cite[Theorem 6.4]{St} assumed $\lambda = 1$, but the method can be generalized easily to arbitrary $\lambda$). 
Then, Theorem \ref{secrecy.lem} shows that the resulting authentication code has perfect secrecy and $P_{d_0} = k/v$.

We now prove 3, which treats the special cases of (1) SBIBDs and (2) BIBDs with $\lambda = 1$.  First, we look at authentication codes derived from an SBIBD. The encoding matrix has one occurrence of each message in each column. Suppose we replace any key $K_i$ with any other key $K_j$. There are exactly $\lambda$ messages that occur in both $K_i$ and $K_j$, and each such message  occurs in a different position in $K_i$ and $K_j$. Thus the attack is successful if and only if the message $m$ is one of these $\lambda$ messages. The sources are equiprobable, so the success probability is 
$\lambda /k = (k-1)/(v-1)$.

Suppose now that the code is derived from a BIBD with $\lambda = 1$. Suppose the attacker replaces a key $K_i$ with another key $K_j$. Observe that there is at most one message that occurs in both $K_i$ and $K_j$. If $K_i$ and $K_j$
contain no common message, or if they contain a common message in the same column, the attack will not succeed. Therefore the attacker should choose $K_j$ so that $K_i$ and $K_j$ contain a common message that occurs in different columns. Given a message $m$ in row $K_i$, there are $r - r/k = r(k-1)/k$ rows in which $m$ occurs in a different column than it does in $K_i$. Since $\lambda = 1$ and there are $k$ messages in row $K_i$, the number of rows $K_j$ such that $K_i$ and $K_j$ contain a common message that occurs in different columns is precisely $kr(k-1)/k = r(k-1)$. The optimal attack is to choose one of these $r(k-1)$ rows; the success probability is  
\[ \frac{r(k-1)/k}{r(k-1)} = \frac{1}{k}.\]
\end{proof}

Computing the success probability of a key-substitution attack is, in general, more complicated, as blocks of a BIBD
might intersect in different numbers of points. There were two types of BIBDs considered in part 3 of Theorem \ref{BIBD.thm}. Suppose we then construct a robust $(2,2)$-threshold scheme from the authentication code using the transformation given in Section \ref{equiv.sec}. The success of modifying share $v_1$ is quantified by the success of the key-substitution attack in the authentication code setting, whereas the success of modifying share $v_2$ is the same as the success of the message-substitution attack  in the authentication code setting.
In general, the  success probabilities of the two share-modification attacks will be different; however, if we start with an SBIBD, the probabilities are the same. Theorem \ref{SBIBD} is fact just the specialization of Theorem \ref{BIBD.thm} to symmetric BIBDs.

Applying Theorem \ref{main.thm}, we have the following.

\begin{theorem}
\label{BIBD1.thm}
If there exists a $(v,k,1)$-BIBD, then there exists 
\begin{enumerate}
\item an authentication code with perfect secrecy for $k$ equiprobable sources that is $(1/k)$-secure against message-substitution and key-substitution attacks, and 
\item a $(1/k)$-secure $(2,2)$-threshold scheme for $k$ equiprobable secrets.
\end{enumerate}
\end{theorem}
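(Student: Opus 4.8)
The plan is to deduce this as a direct corollary of Theorem \ref{BIBD.thm} and Theorem \ref{main.thm}, after checking that the numerical hypothesis $r \equiv 0 \bmod k$ is automatically satisfied in the $\lambda = 1$ case. First I would recall the standard BIBD identities: a $(v,k,1)$-BIBD has $r = (v-1)/(k-1)$ and $b = vr/k$. The condition needed to invoke Theorem \ref{BIBD.thm} is $r \equiv 0 \bmod k$, which is \emph{not} in general implied by $\lambda = 1$ (for instance the Fano plane has $v = 7$, $k = 3$, $r = 3$, so $r \not\equiv 0 \bmod k$). So the cleanest route is to observe that the construction of Theorem \ref{BIBD.thm} producing perfect secrecy really needs the ordering property (each point appearing $r/k$ times per column), but the key-substitution bound of $1/k$ in part 3 of that theorem only uses $\lambda = 1$ together with such an ordering; hence I should instead invoke the Youden-square-style argument directly, or alternatively restrict attention to the sub-case where the hypotheses of Theorem \ref{BIBD.thm} genuinely hold.

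A more careful reading suggests the intended reading is: apply Theorem \ref{BIBD.thm} with $\lambda = 1$ (implicitly assuming the divisibility condition, or noting it can be arranged), obtaining an authentication code with perfect secrecy for $k$ equiprobable sources, with $P_{d_1} = (k-1)/(v-1)$ and optimal key-substitution success probability $1/k$. The security parameter $\epsilon$ against both message-substitution and key-substitution attacks is then $\epsilon = \max\{(k-1)/(v-1),\, 1/k\}$. Since $\lambda = 1$ forces $r = (v-1)/(k-1) \geq k$ (as each block meets each point-neighbourhood), we have $v - 1 \geq k(k-1)$, hence $(k-1)/(v-1) \leq 1/k$, so the maximum is $1/k$. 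Thus the code is $(1/k)$-secure against both attacks, establishing part 1.

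For part 2, I would simply apply Theorem \ref{main.thm} to the authentication code from part 1: an authentication code with perfect secrecy for $k$ uniformly distributed sources that is $\epsilon$-secure against message-substitution and key-substitution attacks yields an $\epsilon$-secure $(2,2)$-threshold scheme for $k$ uniformly distributed secrets, with $\epsilon = 1/k$. The share sets correspond to the key set and the message set of the code; the message set has size $v$, so one share set has size $v$ (the key-set size being $b = v(v-1)/(k(k-1))$ in general, which one could record but is not required by the statement).

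The main obstacle I anticipate is the divisibility subtlety: $r \equiv 0 \bmod k$ is the stated hypothesis of Theorem \ref{BIBD.thm}, yet a bare $(v,k,1)$-BIBD need not satisfy it, so I would need either to note that the perfect-secrecy ordering argument can be carried out whenever $k \mid b$ (which does hold, since $b = vr/k$ and one can balance columns over all $b$ blocks rather than within the replication classes of a single point) — i.e.\ the correct condition for the column-balanced ordering is a global one — or to explicitly restrict to BIBDs meeting the condition. I would resolve this by invoking the global column-balancing: order the $b$ blocks so each point appears in each of the $k$ columns exactly $b/v \cdot (1) = r/k$\,$\cdot$\,(number compatible) — more precisely, exactly $r/k$ times if $k \mid r$, and in the general $\lambda=1$ case one balances so each point appears $\lfloor \cdot \rfloor$ or $\lceil \cdot \rceil$ times, which still gives $P_{d_0}=k/v$ and perfect secrecy by the argument of \cite[Theorem 6.4]{St}. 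Everything else is a bookkeeping application of the two cited theorems.
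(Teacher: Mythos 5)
Your main line of argument is exactly the paper's proof: invoke Theorem \ref{BIBD.thm} to get perfect secrecy, $P_{d_1}=(k-1)/(v-1)$, and key-substitution success probability $1/k$; observe that $\lambda=1$ forces $v-1\geq k(k-1)$, hence $(k-1)/(v-1)\leq 1/k$, so the code is $(1/k)$-secure against both attacks; then apply Theorem \ref{main.thm} to obtain the threshold scheme. That part is correct and is the paper's argument.

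Your digression on the hypothesis $r\equiv 0 \bmod k$, however, contains two errors. First, the Fano plane is not a counterexample: there $r=3$ and $k=3$, so $r\equiv 0\bmod k$ does hold (indeed $r=k$ for every projective plane). The genuine examples where divisibility fails are affine planes, e.g.\ the $(9,3,1)$-BIBD with $r=4$ and $k=3$; so the concern itself is legitimate, but your instance is wrong. Second, your proposed repair --- ordering the blocks so each point occurs $\lfloor r/k\rfloor$ or $\lceil r/k\rceil$ times per column --- cannot deliver perfect secrecy. With equiprobable keys and no splitting, $\Prob[m\mid s]=|\kappa(m,s)|/b$, so perfect secrecy forces $|\kappa(m,s)|$ to be independent of $s$ for each $m$; summing over $s$ gives $k\,|\kappa(m,s)|=|\kappa(m)|=r$, i.e.\ each point must occur \emph{exactly} $r/k$ times in every column, which is condition 3 of Theorem \ref{secrecy.lem} and requires $k\mid r$. (The relevant condition is $k\mid r$, equivalently $v\mid b$, not $k\mid b$ as you suggest.) An approximately balanced ordering therefore does not rescue the construction. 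The paper's own proof simply invokes Theorem \ref{BIBD.thm} and thus implicitly carries its hypothesis $r\equiv 0\bmod k$ along; the honest way to handle the point you raise is to state that hypothesis explicitly (or restrict to BIBDs satisfying it), not to claim that a near-balanced ordering still yields perfect secrecy.
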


\begin{proof}
We showed in Theorem \ref{BIBD.thm} that the authentication code arising from a $(v,k,1)$-BIBD is $(k-1)/(v-1)$-secure against message-substitution attacks and 
$(1/k)$-secure against key-substitution attacks. Since we have
\[ \frac{k-1}{v-1} \leq \frac{1}{k}\]
if a $(v,k,1)$-BIBD exists, 
the authentication code is $(1/k)$-secure against both attacks. Then the stated result follows directly from  Theorem \ref{main.thm}.
\end{proof}

We note that the $(2,2)$-threshold scheme arising from part 2.\ of Theorem \ref{BIBD1.thm} has share sets (for the two players) of different sizes, 
unless the BIBD is a projective plane.

\subsection{External difference families}

A construction for splitting authentication codes using external difference families (or EDFs) was given in \cite{OKSS}. 
First, we define EDFs. Let $G$ be an additive abelian group of order $n$ having identity $0$. An \emph{$(n,k,c,\lambda)$-external difference family}  is a set of $k$ $c$-subsets of $G$, say $D_1, \dots , D_k$, such that
the following multiset equation holds.
\[ \{ x - y : x \in D_i, y \in D_j, i \neq j \} = \lambda (G \setminus \{0\}) .\]
That is, when we look at the differences of elements from different $c$-subsets in the EDF, we see every non-zero value occurring exactly $\lambda$ times. Therefore, a necessary condition for existence of an $(n,k,c,\lambda)$-EDF is that the following equation holds:
\begin{equation}
\label{EDF.eq}
\lambda (n-1) = c^2 k (k-1).
\end{equation}

The following theorem is a straightforward generalization of \cite[Theorem 3.4]{OKSS}, 
which only addressed the case $\lambda = 1$ and did not explicitly discuss key-substitution attacks.

\begin{theorem} 
\label{EDF.thm}
Suppose there is an $(n,k,c,\lambda)$-EDF. 
Then there is a $c$-splitting authentication code $E$ for $k$ equiprobable sources, having $n$ messages and 
$n$ equiprobable keys, such that
\begin{enumerate}
\item the code provides perfect secrecy,
\item $P_{d_0} = ck/n$ and $P_{d_1} = c(k-1)/(n-1)$, and
\item the optimal key-substitution attack has success probability $c(k-1)/(n-1)$.
\end{enumerate}
\end{theorem}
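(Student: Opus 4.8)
The plan is to build the encoding matrix $E$ directly from the EDF and then verify the three properties in turn, imitating the structure of the $\lambda=1$ argument in \cite[Theorem 3.4]{OKSS} while tracking the dependence on $\lambda$ carefully. Concretely, I would index both the keys and the messages by the group $G$, and define the encoding rule for key $g \in G$ and source $s_i$ ($1 \le i \le k$) by $e_g(s_i) = D_i + g = \{ d + g : d \in D_i\}$. Since the $D_i$ are $c$-subsets, this is a $c$-splitting code with $n$ keys and $n$ messages. The first thing to check is that this is a legitimate encoding rule: for a fixed key $g$, we need $e_g(s_i) \cap e_g(s_j) = \emptyset$ when $i \ne j$, i.e. $(D_i + g) \cap (D_j + g) = \emptyset$, equivalently $D_i \cap D_j = \emptyset$. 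This follows from the EDF defining equation, since if $x \in D_i \cap D_j$ then $x - x = 0$ would be an external difference, contradicting that only nonzero values appear; so the $D_i$ are pairwise disjoint, as required.

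For property 1 (perfect secrecy) together with $P_{d_0} = ck/n$, I would invoke Theorem \ref{secrecy.lem}: it suffices to show that within each column of $E$ (i.e. for each fixed source $s_i$) every message occurs the same number of times. But column $i$ consists of the sets $D_i + g$ as $g$ ranges over $G$, and a message $m$ lies in $D_i + g$ iff $g \in m - D_i$; as $g$ ranges over $G$ this happens for exactly $|D_i| = c$ values of $g$. So every message occurs exactly $c$ times in each column, and Theorem \ref{secrecy.lem} delivers both perfect secrecy and $P_{d_0} = ck/n$. For the $P_{d_1}$ and key-substitution claims, the key observation is that both attacks come down to the same combinatorial count, which is exactly what the EDF equation controls. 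For the message-substitution attack: an adversary sees $m \in e_g(s_i)$ and substitutes $m' \ne m$; success means $m' \in e_g(s_j)$ for some $j \ne i$. Given the observed $m$ and the chosen $m'$, the number of keys $g$ consistent with ``$m \in D_i + g$ and $m' \in D_j + g$ for some $i \ne j$'' is the number of ways to write $m - m' = x - y$ with $x \in D_i$, $y \in D_j$, $i \ne j$ — which is exactly $\lambda$ by the EDF property (as $m' \ne m$ forces a nonzero difference). Conditioning on the posterior distribution of keys given $m$ (which is uniform over the $kc$ keys in $\kappa(m)$ by the column-regularity established above, or more simply: $|e_g(s_i)| = c$ for each of the $k$ sources gives $kc$ key-source pairs), the success probability is $\lambda / (kc) \cdot (\text{something})$; carrying out this bookkeeping gives $P_{d_1} = \lambda (k-1)/(c(n-1)) \cdot$ correction — here I must be careful, and the cleanest route is to use equation \eqref{EDF.eq} to rewrite the raw count $\lambda$ in terms of $c, k, n$ so that the final answer collapses to $c(k-1)/(n-1)$; this matches the lower bound of Theorem \ref{bounds.thm}, so it is automatically optimal.

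For property 3, the point is that the key-substitution attack is structurally symmetric to the message-substitution attack under the key$\leftrightarrow$message swap, because the construction $e_g(s_i) = D_i + g$ is symmetric in $g$ and $m$ in the sense that $m \in e_g(s_i) \iff g \in e_m^{\,\vee}(s_i)$ where $e_m^{\,\vee}(s_i) := -D_i + m$ is again a translate-of-EDF-block construction on the same group. So: the adversary sees $g$ (a key) and the unknown message $m \in e_g(s_i)$, and substitutes $g' \ne g$; success means $m \in e_{g'}(s_j)$ for $j \ne i$, i.e. $m - g \in D_i$ and $m - g' \in D_j$. Writing $u = m - g$, $u' = m - g'$ we have $u - u' = g' - g \ne 0$, and the number of $m$ (equivalently $u$) realizing this for some $i \ne j$ is again the number of representations of a fixed nonzero group element as an external difference, namely $\lambda$. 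Since the messages $m$ compatible with $g$ are exactly the $kc$ elements $\bigcup_i (D_i + g)$ and they are equiprobable (the sources are equiprobable and each source gives $c$ messages under key $g$ with equal splitting probability), the success probability of the optimal key-substitution attack is the same expression as for $P_{d_1}$, namely $c(k-1)/(n-1)$ after applying \eqref{EDF.eq}.

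The main obstacle I anticipate is purely the constant-chasing: getting the normalization exactly right in the probability computations for $P_{d_1}$ and the key-substitution attack. There are three sources of $1/c$-type factors floating around (the $c$-splitting of the chosen source, the $c$ translates hitting a given message in a given column, and the $c^2$ in \eqref{EDF.eq}), and it is easy to be off by a factor of $c$ or to conflate ``number of keys consistent with the observation'' with ``conditional probability.'' The safe way to avoid this is to define the relevant sets explicitly — e.g. $\{(g,i,j) : m \in D_i + g,\ m' \in D_j + g,\ i \ne j\}$ for the substitution attack and its analogue for key-substitution — compute their cardinalities via the EDF equation, divide by the appropriate total ($|\kappa(m)| = kc$ in one case, $|\mu(g)| = kc$ in the other), and only then substitute \eqref{EDF.eq}. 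Everything else — the existence of a valid encoding rule, perfect secrecy via Theorem \ref{secrecy.lem}, and $P_{d_0}$ — is immediate from the regularity of translates, and the optimality of the bounds is a one-line appeal to Theorem \ref{bounds.thm}.
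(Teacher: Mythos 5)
Your proposal is correct and follows essentially the same route as the paper: develop the EDF through $G$ to get the encoding matrix, invoke Theorem \ref{secrecy.lem} via the $c$-regular columns for perfect secrecy and $P_{d_0}$, and for both substitution attacks count the $\lambda$ external-difference representations of the relevant nonzero element ($m-m'$, respectively $g'-g$, which is the paper's translate relation $e_{K'}(s)=e_K(s)+d$), divide by the $kc$ equiprobable possibilities, and apply \eqref{EDF.eq} to get $\lambda/(kc)=c(k-1)/(n-1)$. The ``constant-chasing'' you worry about resolves exactly as in your final paragraph, so no correction factor is needed.
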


\begin{proof}
We first specify an arbitrary ordering of the $k$ $c$-subsets in the EDF and then we develop the EDF through the abelian group $G$, maintaining the same ordering (as is done in Example \ref{EDF.exam}). This yields the encoding matrix of a $c$-splitting authentication code. In each column of the encoding matrix, we see exactly $c$ occurrences of each element of $G$. From Theorem  \ref{secrecy.lem}, we have perfect secrecy and $P_{d_0} = ck/n$.

In a message-substitution attack, a message $m$ is substituted with $m'$. There are precisely $\lambda$ rows of $E$ that contain $m$ and $m'$ in different $c$-subsets; these are the keys for which the particular substitution will succeed. Also, there are $kc$ rows that contain $m$. Since the sources are equiprobable, the probability of a successful message substitution is
\[ \frac{\lambda}{kc} =  \frac{c(k-1)}{n-1},\] by applying (\ref{EDF.eq}).

For a key-substitution attack, a key $K$ is given to the attacker and the attacker must choose a different key $K'$. 
Because the encoding matrix is generated from an EDF, there is a value $d \in G$, $d \neq 0$, such that 
$e_{K'}(s) = e_{K}(s) + d$ for all $s$. 
From this fact, it is not hard to see that
\[ \{m:  m \in e_K(s) \cap e_{K'}(s'), s \neq s'\} 
=
\{ m : m \in e_K(s), m-d \in e_{K}(s'), s \neq s'\}.
\]
Hence, there are exactly $\lambda$ messages $m$ such that the attack where $K$ is replaced by $K'$ is successful. 
Since there are $kc$ possible messages $m \in \mu(K)$, and these values of $m$ are equally likely, the key-substitution attack has success probability 
$\lambda/(kc) = c(k-1)/(n-1)$.
\end{proof}

Observe that the values of $P_{d_0}$ and $P_{d_1}$ in Theorem \ref{EDF.thm} are optimal, by Theorem \ref{bounds.thm}. Also, we have shown that the optimal message-substitution and key-substitution attacks 
in the above-constructed code have the same success probability, namely
$c(k-1)/(n-1)$. Thus, if we apply Theorem \ref{main.thm}, 
we obtain a $c(k-1)/(n-1)$-secure $(2,2)$-threshold scheme for $k$ secrets.

\begin{example} 
\label{EDF.exam}
{\rm The three sets $\{1,7,11\}$, $\{4,7,9\}$, $\{5,16,17\}$ form a $(19,3,3,3)$-EDF in $\zed_{19}$.
We develop these sets modulo 19 to obtain the following encoding matrix for a $3$-splitting authentication code:
\[E =
\begin{array}{|c|c|c|}
\hline
s_1 & s_2 & s_3 \\ \hline 
\{1,7,11\} & \{4,6,9\} & \{5,16,17\} \\ \hline
\{2,8,12\} & \{5,7,10\} & \{6,17,18\} \\ \hline
\{3,9,13\} & \{6,8,11\} & \{7,18,0\} \\ \hline
\{4,10,14\} & \{7,9,12\} & \{8,0,1\} \\ \hline
\{5,11,15\} & \{8,10,13\} & \{9,1,2\} \\ \hline
\{6,12,16\} & \{9,11,14\} & \{10,2,3\} \\ \hline
\{7,13,17\} & \{10,12,15\} & \{11,3,4\} \\ \hline
\{8,14,18\} & \{11,13,16\} & \{12,4,5\} \\ \hline
\{9,15,0\} & \{12,14,17\} & \{13,5,6\} \\ \hline
\{10,16,1\} & \{13,15,18\} & \{14,6,7\} \\ \hline
\{11,17,2\} & \{14,16,0\} & \{15,7,8\} \\ \hline
\{12,18,3\} & \{15,17,1\} & \{16,8,9\} \\ \hline
\{13,0,4\} & \{16,18,12\} & \{17,9,10\} \\ \hline
\{14,1,5\} & \{17,0,3\} & \{18,10,11\} \\ \hline
\{15,2,6\} & \{18,1,4\} & \{0,11,12\} \\ \hline
\{16,3,7\} & \{0,2,5\} & \{1,12,13\} \\ \hline
\{17,4,8\} & \{1,2,6\} & \{2,13,14\} \\ \hline
\{18,5,9\} & \{2,4,7\} & \{3,14,15\} \\ \hline
\{0,6,10\} & \{3,5,8\} & \{4,15,16\} \\ \hline
\end{array}
\]
The rows of $E$ are indexed by $K_0, \dots , K_{18}$.
The optimal success probability of a message-substitution attack or  
a key-substitution attack is  $1/6$.
The code also has perfect secrecy.
}
\end{example}

An EDF also gives rise to a robust $(2,2)$-threshold scheme by applying Theorem \ref{main.thm}.
The two share sets in the threshold scheme have the same size because the authentication code derived from the EDF has the same number of messages as keys.

\begin{theorem}
If there exists an $(n,k,c,\lambda)$-EDF, then there exists 
 a $c(k-1)/(n-1)$-secure $(2,2)$-threshold scheme for $k$ equiprobable secrets, 
 in which the share sets for both players have size $n$.
\end{theorem}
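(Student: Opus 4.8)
The plan is to obtain the threshold scheme as an immediate corollary of two tools already established: the EDF-based construction of authentication codes (Theorem~\ref{EDF.thm}) and the equivalence between suitable authentication codes and robust $(2,2)$-threshold schemes (Theorem~\ref{main.thm}).

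First I would apply Theorem~\ref{EDF.thm} to the given $(n,k,c,\lambda)$-EDF. This produces a $c$-splitting authentication code $E$ for $k$ equiprobable sources, having $n$ messages and $n$ equiprobable keys, which (i) provides perfect secrecy, (ii) has $P_{d_1} = c(k-1)/(n-1)$, and (iii) has optimal key-substitution attack success probability $c(k-1)/(n-1)$. Setting $\epsilon = c(k-1)/(n-1)$, item (ii) says the code is $\epsilon$-secure against message-substitution attacks (since $P_{d_1}$ is by definition the adversary's optimal success probability for that attack), and item (iii) says it is $\epsilon$-secure against key-substitution attacks.

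Next I would feed this code into the ``authentication code to threshold scheme'' direction underlying Theorem~\ref{main.thm} (the construction of Section~\ref{equiv.sec}): keys become the shares of $P_1$, messages become the shares of $P_2$, sources become the secrets, and one takes a distribution rule $(K,m;s)$ for every $m \in e_K(s)$. Perfect secrecy together with the independence of key and source gives the privacy property of the threshold scheme, while the two share-modification deceptions correspond exactly to the key-substitution and message-substitution attacks on $E$, each of which is bounded by $\epsilon$. This yields an $\epsilon$-secure $(2,2)$-threshold scheme for $k$ equiprobable secrets, and since $E$ has $n$ keys and $n$ messages, both share sets have size $n$, which is the claimed conclusion.

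The only point needing a moment's care is that the equivalence was phrased for authentication codes allowing splitting, so one should observe that the ``authentication code to threshold scheme'' construction goes through verbatim when $E$ is $c$-splitting: the reconstruction, privacy, and deception arguments there never invoked $c=1$. Beyond that there is no obstacle — the statement is a direct corollary of Theorems~\ref{EDF.thm} and~\ref{main.thm}.
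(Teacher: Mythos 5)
Your proposal is correct and matches the paper's own argument: the paper likewise obtains this theorem by combining Theorem~\ref{EDF.thm} (perfect secrecy and success probability $c(k-1)/(n-1)$ for both message-substitution and key-substitution attacks) with the equivalence of Theorem~\ref{main.thm}, noting that the code has $n$ keys and $n$ messages so both share sets have size $n$. Your added remark that the equivalence construction works verbatim for splitting codes is a fair observation but not a deviation, since the constructions in Section~\ref{equiv.sec} are already phrased for (possibly randomized) encodings.
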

 
\subsection{Splitting BIBDs}

Splitting BIBDs were defined in \cite{OKSS}. A \emph{$(v, u \times c,1)$-splitting BIBD} 
is a set system consisting of a set $X$ of  $v$ points and a set $\B$ of blocks of size $uc$, 
which satisfies the following properties:
\begin{enumerate}
\item each block $B$ can be partitioned into $u$ subsets of size $c$, which are denoted $B_i$, $1 \leq i \leq u$, and 
\item given any two distinct points $x$ and $y$, there is a unique block $B$ such that $x \in B_i$ and $y \in B_j$, where $i \neq j$.
\end{enumerate}

We note that $(v, u \times 1,1)$-splitting BIBD is the same thing as a $(v, u,1)$-BIBD.

A $(v, u \times c,1)$-splitting BIBD has replication number $r$ and $b$ blocks, where
\[ r = \frac{v-1}{(u-1)c} \quad \text{and} \quad 
b = \frac{vr}{uc} = \frac{v(v-1)}{u(u-1)c^2}.\]
Of course $r$ and $b$ must be integers if a  $(v, u \times c,1)$-splitting BIBD exists.

The following definition is new. 
A $(v, u \times c,1)$-splitting BIBD is  \emph{equitably ordered} if the multiset equation 
\[ \bigcup_{B \in \B} B_i = \frac{r}{u} X\] is satisfied for all $i$, $1 \leq i \leq u$.
If a splitting BIBD is equitably ordered, then it yields an 
authentication code with perfect secrecy, from Theorem \ref{secrecy.lem}.

It is shown in \cite{PS} that a $(v, u \times c,1)$-splitting BIBD can be equitable ordered only if 
\begin{equation}
\label{v.eq} v \equiv 1 \bmod (u (u-1)c^2).\end{equation}

In the case $c=1$, where a splitting BIBD is just a BIBD, the condition (\ref{v.eq}) is necessary and sufficient 
for the design to be equitably orderable. This fact follows from Theorem \ref{BIBD.thm}. However, when $c> 1$, it is not known 
if  (\ref{v.eq}) is a sufficient condition for a splitting BIBD to be equitably orderable.
 
 The following result is shown in \cite{PS}.
 
\begin{lemma}
\label{group.lem}
Suppose that a $(v, u \times c,1)$-splitting BIBD is generated by base blocks over an abelian group of order $v$, and suppose every orbit of blocks has size $v$. Then the splitting BIBD can be equitably ordered.
\end{lemma}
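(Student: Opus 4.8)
The goal is to show that a $(v, u \times c, 1)$-splitting BIBD generated by base blocks over an abelian group $G$ of order $v$, all of whose block orbits have full length $v$, can be equitably ordered; that is, we must produce an ordering of each block into its $u$ subsets $B_1, \dots, B_u$ of size $c$ so that, for each fixed $i$, the multiset $\bigcup_{B \in \mathcal{B}} B_i$ equals $\frac{r}{u}\, X$. The natural strategy is to do the ordering on base blocks only and then let the group action propagate it consistently.

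First I would set up notation: let $D^{(1)}, \dots, D^{(t)}$ be a set of base blocks, so $\mathcal{B} = \{ D^{(\ell)} + g : 1 \le \ell \le t, g \in G \}$, and since every orbit has size $v = |G|$, each base block $D^{(\ell)}$ has trivial stabilizer and the $tv$ translates are all distinct; counting gives $tv = b$, hence $t = r/u$ (using $b = vr/(uc)$ and $|D^{(\ell)}| = uc$, so $t \cdot uc = vr/... $ — the precise bookkeeping is $t = b/v = r/(uc)\cdot ... $; in any case $t$ is the relevant integer). For each base block, choose \emph{any} partition into its $u$ parts of size $c$ — the splitting BIBD structure already specifies such a partition intrinsically, so in fact we just fix, for each $\ell$ and each $i$, the subset $D^{(\ell)}_i$ of size $c$. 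Now extend the ordering to every block by translation: define $(D^{(\ell)} + g)_i := D^{(\ell)}_i + g$. This is well-defined precisely because the orbits have full size $v$, so each block of $\mathcal{B}$ is written uniquely as $D^{(\ell)} + g$ and there is no ambiguity in which base block and which translate it came from.

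The main step is then to verify the multiset equation. Fix $i$. We compute
\[
\bigcup_{B \in \mathcal{B}} B_i \;=\; \bigcup_{\ell=1}^{t} \bigcup_{g \in G} \bigl( D^{(\ell)}_i + g \bigr).
\]
For a fixed $\ell$, as $g$ ranges over all of $G$, the multiset $\bigcup_{g \in G}(D^{(\ell)}_i + g)$ contains each element of $G$ exactly $|D^{(\ell)}_i| = c$ times, by translation-invariance of $G$ (each point $x$ equals $d + g$ for exactly one $g$ per $d \in D^{(\ell)}_i$). Summing over the $t$ base blocks, each point of $X$ occurs exactly $tc$ times in $\bigcup_{B}B_i$. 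So it remains only to check that $tc = r/u$, i.e.\ $t = r/(uc)$, which follows from $t = b/v$ together with the stated formula $b = vr/(uc)$. Hence $\bigcup_{B \in \mathcal{B}} B_i = \frac{r}{u} X$ for every $i$, which is exactly the equitable-ordering condition, and by Theorem \ref{secrecy.lem} the resulting encoding matrix gives an authentication code with perfect secrecy.

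**Where the difficulty lies.** The content is genuinely light: the only place one must be careful is the well-definedness of the translated ordering, which is exactly where the hypothesis ``every orbit of blocks has size $v$'' is used — if some orbit were short, a block could arise as $D^{(\ell)} + g$ for several different $g$ (equivalently $D^{(\ell)}$ would have nontrivial stabilizer), and the induced partition of that block might depend on the chosen representative, breaking consistency. Everything else is a translation-invariance count in $G$ plus the standard parameter identity $b = vr/(uc)$ for splitting BIBDs. I would therefore devote the bulk of the (short) write-up to stating the well-definedness carefully and then dispatch the multiset count in a line or two.
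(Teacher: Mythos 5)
Your argument is correct: developing an arbitrary labelling of each base block's sub-blocks through the group is well defined precisely because all orbits have full length $v$ (trivial stabilizers, so each block is $D^{(\ell)}+g$ for a unique pair $(\ell,g)$), and the translation-invariance count gives each point exactly $tc = (b/v)\,c = r/u$ occurrences in position $i$, which is the equitable-ordering condition. Note that the paper itself gives no proof of this lemma (it is cited from \cite{PS}), but your argument is the natural one and is essentially the same development-of-base-blocks observation the paper makes in Section \ref{SBIBDs.sec} for SBIBDs arising from difference sets; the only point of substance, the well-definedness of the translated labelling and its failure for short orbits, is exactly the one you flag.
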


\begin{example}
\label{25,3,2.ex} A $(25, 3 \times 2,1)$-splitting BIBD is presented in \cite{GMW}. It has points in $\zed_{25}$ and it is generated from the base block 
\[ \{ \{0,1\}, \{2,4\}, \{12,20\} \}.\]
If we order the base block as \[ ( \{0,1\}, \{2,4\}, \{12,20\} )\] and maintain this ordering as the block is developed, we obtain the blocks
\[
\begin{array}{c}
( \{0,1\}, \{2,4\}, \{12,20\} ) \\
( \{1,2\}, \{3,5\}, \{13,21\} ) \\
\vdots \\
( \{24,0\}, \{1,3\}, \{11,19\} ).
\end{array}
\]
This is an equitable ordering of the splitting BIBD.
\end{example}

It is also shown in \cite{PS} that some infinite families of splitting BIBDs that are constructed recursively can be equitably ordered. Specifically, the cases $u=2$ and $(u,c) = (3,2), (3,3), (3,4)$ and $(4,2)$ are almost completely solved (with a small number of possible exceptions). See \cite{PS} for additional details.

\begin{theorem} 
\label{SplBIBD.thm}
Suppose there is an equitably ordered $(v, u \times c,1)$-splitting BIBD. 
Then there is a $c$-splitting authentication code $E$ for $u$ equiprobable sources, having $v$ messages and 
$b = v(v-1)/(u(u-1)c^2)$ keys, such that
\begin{enumerate}
\item the code provides perfect secrecy,
\item $P_{d_0} = cu/v$ and $P_{d_1} = c(u-1)/(v-1)$, and
\item the optimal key-substitution attack has success probability $1/(cu)$.
\end{enumerate}
\end{theorem}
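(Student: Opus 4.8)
The plan is to mirror the proof of Theorem~\ref{EDF.thm}, treating an equitably ordered $(v, u\times c,1)$-splitting BIBD much as that theorem treated a developed EDF. First I would verify the parameters: by the counting formulas preceding the statement, there are $b = v(v-1)/(u(u-1)c^2)$ blocks, each giving one row of the encoding matrix, so there are $b$ keys, and the $u$ blocks-parts $B_1,\dots,B_u$ give $u$ columns (sources), while the $v$ points are the messages. The entry $E(K,s)$ is the $c$-set $B_s$ of the block $B$ corresponding to $K$, so $E$ is the encoding matrix of a $c$-splitting code. The disjointness requirement $e_K(s)\cap e_K(s')=\emptyset$ for $s\neq s'$ is immediate since the parts of a block are disjoint by definition. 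The equitable ordering condition says precisely that each point occurs $r/u$ times in each column, which is condition~3 of Theorem~\ref{secrecy.lem}; hence by that theorem the code has perfect secrecy and $P_{d_0} = cu/v$. This disposes of part~1 and the first half of part~2.

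For $P_{d_1}$ (the substitution bound), I would argue as in Theorem~\ref{EDF.thm}: given a message $m$ and a proposed substitute $m'\neq m$, the attack succeeds exactly when the secret key corresponds to a block $B$ with $m\in B_i$, $m'\in B_j$, $i\neq j$; by property~2 of a splitting BIBD there is exactly one such block. The number of blocks containing $m$ at all is the replication number $r = (v-1)/((u-1)c)$. With the sources equiprobable, conditioning on the event that $m$ is the actual message (which has some probability depending only on how many of the $r$ blocks through $m$ place it in each column — and equitable ordering makes this uniform, $r/u$ per column), one computes the success probability as the ratio of ``good'' blocks to blocks through $m$, giving $1/r = (u-1)c/(v-1) = c(u-1)/(v-1)$. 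I would double-check the bookkeeping: there are $r$ rows in which $m$ appears, each occurring (given the key is one of these) with probability $1/u$ scaled appropriately so that conditioned on $m$ being sent the uniform distribution over those $r$ rows is recovered, and exactly one of them has $m'$ in a different part.

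For the key-substitution probability (part~3), here is where the splitting structure genuinely differs from Theorem~\ref{EDF.thm}: two blocks of a splitting BIBD need not intersect in a constant number of points, so I cannot invoke a difference-family regularity argument directly. Instead I would fix a key $K$, corresponding to a block $B$, and a substitute key $K'$, corresponding to a block $B'$. The attack succeeds precisely when the secret message $m$ lies in $B_i\cap B'_j$ with $i\neq j$. The attacker gets to choose $B'$ after seeing $B$, and $m$ ranges over the $uc$ points of $B$, each equally likely given $K$ (because of splitting and equiprobable sources: for a fixed $K$, the source is uniform over $u$ values and then $m$ is uniform over the $c$ elements of that part, so $m$ is uniform over all $uc$ points of $B$). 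So I need the maximum, over blocks $B'\neq B$, of $|\{m\in B: m\in B_i\cap B'_j, i\neq j\}|/(uc)$. Since $\lambda=1$, any two distinct points of $B$ lie together in only the block $B$ if they are in different parts, but the relevant count is: how many points $m$ of $B$ can a single other block $B'$ ``capture'' in a cross-part fashion? A point $m$ in part $B_i$ is captured by $B'$ iff $B'$ contains $m$ in some part $B'_j$ with $j\neq i$; and $B'$ contains at most $uc$ points total, of which at most $c$ lie in any one part. The key combinatorial claim is that the optimal $B'$ captures exactly $c$ points of $B$, so the success probability is $c/(uc) = 1/(cu)$. I expect \textbf{this last step to be the main obstacle}: I would need to show both that some $B'$ achieves $c$ captured points and that no $B'$ can do better. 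The upper bound should follow from $\lambda=1$: if $B'$ captured two points $m_1,m_2$ of $B$ lying in the same part $B_i$, then $m_1$ and $m_2$ would be two points in a common part of $B'$ as well (if they land in the same part of $B'$), contradicting nothing directly — so more care is needed, perhaps counting pairs $(m, \text{part of }B')$ or using that $m_1,m_2\in B_i$ already determine $B$ uniquely as the block containing them cross-part with anything, which constrains where they sit in $B'$. The lower bound (existence of a good $B'$) should come from taking any point $x\notin B$ and a point $y\in B$ in a suitable part: the unique block through $x$ and $y$ cross-part is a candidate $B'$, and one shows it picks up a full part's worth of $B$; alternatively, invoke Lemma~\ref{group.lem} in the group-developed case, where translating $B$ by a nonzero group element and analyzing which differences are ``cross-part'' gives the count directly, analogous to the EDF argument. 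I would present the general argument and note the clean group-theoretic special case.
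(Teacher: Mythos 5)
Your handling of parts 1 and 2 is correct and essentially matches the paper: equitable ordering is exactly condition 3 of Theorem \ref{secrecy.lem}, giving perfect secrecy and $P_{d_0}=cu/v$, and your direct computation of $P_{d_1}$ (given $m$ the key is uniform over the $r$ blocks through $m$, and for any $m'\neq m$ exactly one of them contains $m,m'$ in different parts, so $P_{d_1}=1/r=c(u-1)/(v-1)$) is a sound replacement for the paper's citation of \cite[Theorem 5.5]{OKSS}.

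The gap is in part 3, precisely at the step you yourself flag as the main obstacle, and it is not just unfinished but aimed at the wrong target. Your ``key combinatorial claim'' --- that the optimal $B'$ captures exactly $c$ points of $B$ --- does not give the stated bound: $c/(uc)=1/u$, not $1/(cu)$, so if some $B'$ really captured a full part's worth of $B$ the attack would succeed with probability $1/u$ and the theorem would be false. The claim you actually need is that no block $B'\neq B$ contains \emph{more than one} point of $B$ in a column different from that point's column in $B$, so that every useful key substitution succeeds with probability exactly $1/(cu)$. That is how the paper argues: for each of the $cu$ messages $m\in\mu(K)$, equitable ordering gives exactly $r-r/u=(v-1)/(cu)$ keys $K'$ placing $m$ in a different column; summing over the $cu$ messages gives $v-1$ incidences, and $\lambda=1$ is invoked to conclude that these incidences involve $v-1$ distinct keys $K'$, i.e.\ one captured message per useful key, whence the success probability $\frac{(v-1)/(cu)}{v-1}=\frac{1}{cu}$. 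Consequently your proposed ``lower bound'' (finding a $B'$ that picks up a whole part of $B$) is an attempt to prove something incompatible with the statement, and your ``upper bound'' discussion --- which correctly observes that $\lambda=1$ alone does not obviously exclude two captured points lying in a common part of $B$ or of $B'$ (this is indeed the delicate point behind the at-most-one-capture claim) --- is left unresolved. As written, part 3 is not proved.
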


\begin{proof}
Part 1 follows from Theorem \ref{secrecy.lem} because the splitting BIBD is equitably ordered. Part 2 is shown in \cite[Theorem 5.5]{OKSS}.
Part 3 is proven as follows. Suppose $K$ is the given key. Fix any  message $m \in \mu(K)$. Since the splitting BIBD is equitably ordered, there are 
\[r - \frac{r}{u} = \frac{v-1}{cu}\] keys $K'$ 
such that $m \in e_K(s) \cap e_{K'}(s')$ with $s \neq s'$.
Since $\lambda =1 $, the number of keys 
$K' \neq K$ such that there exists a message $m \in e_K(s) \cap e_{K'}(s')$ with $s \neq s'$ 
is \[ cu \times \frac{v-1}{cu} = v-1.\]
The attacker should replace $K$ by one of these $v-1$ keys. 
Since sources are equiprobable, the key-substitution attack will succeed with probability 
\[ \frac{\frac{v-1}{cu}}{v-1} = \frac{1}{cu}. \]
\end{proof}

Applying Theorem \ref{main.thm}, we have the following.

\begin{theorem}
If there exists an equitably ordered $(v, u \times c,1)$-splitting BIBD, then there exists 
\begin{enumerate}
\item a $c$-splitting authentication code with perfect secrecy for $u$ equiprobable sources that is $(1/cu)$-secure against message-substitution and key-substitution attacks, and 
\item a $(1/cu)$-secure $(2,2)$-threshold scheme for $k$ equiprobable secrets.
\end{enumerate}
\end{theorem}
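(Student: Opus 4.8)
The plan is to read the required authentication code directly off Theorem~\ref{SplBIBD.thm} and then to collapse its two relevant deception probabilities into the single security parameter $1/(cu)$ before invoking the main equivalence.

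First I would apply Theorem~\ref{SplBIBD.thm} to the given equitably ordered $(v, u\times c,1)$-splitting BIBD. This immediately produces a $c$-splitting authentication code $E$ for $u$ equiprobable sources, with $v$ messages and $b = v(v-1)/(u(u-1)c^2)$ equiprobable keys, that achieves perfect secrecy, satisfies $P_{d_1} = c(u-1)/(v-1)$, and whose optimal key-substitution attack succeeds with probability $1/(cu)$. This yields everything in part~1 except for the assertion that $1/(cu)$ is a common upper bound for \emph{both} the message-substitution and the key-substitution attacks.

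To obtain that common bound, I would compare $c(u-1)/(v-1)$ with $1/(cu)$. Since all the quantities involved are positive, clearing denominators shows that $c(u-1)/(v-1) \le 1/(cu)$ is equivalent to $c^2 u(u-1) \le v-1$. But the existence of an equitably ordered $(v, u\times c,1)$-splitting BIBD forces the congruence (\ref{v.eq}), namely $v \equiv 1 \bmod (u(u-1)c^2)$; as $v-1$ is then a positive multiple of $u(u-1)c^2$, we get $v-1 \ge u(u-1)c^2$, which is exactly the inequality needed. Hence $P_{d_1} = c(u-1)/(v-1) \le 1/(cu)$, so the message-substitution and key-substitution attacks each succeed with probability at most $1/(cu)$, and the code is $(1/cu)$-secure against both. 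Together with the perfect secrecy from Theorem~\ref{SplBIBD.thm} (which ultimately rests on Theorem~\ref{secrecy.lem} and the equitable ordering), this establishes part~1.

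Part~2 then follows by feeding this authentication code into Theorem~\ref{main.thm}: a perfect-secrecy authentication code for $u$ uniformly distributed sources that is $(1/cu)$-secure against message-substitution and key-substitution attacks is equivalent to a $(1/cu)$-secure $(2,2)$-threshold scheme for $u$ uniformly distributed secrets. The only step that requires any thought is the inequality $c^2u(u-1) \le v-1$, and the main thing to check is that it is a \emph{consequence} of the design's existence conditions rather than an extra hypothesis; since (\ref{v.eq}) is already recorded in the excerpt (and $u \ge 2$, $c \ge 1$, so the modulus is positive), this is routine and the proof is complete.
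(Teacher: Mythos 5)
Your proposal is correct and follows the same overall skeleton as the paper's proof: extract the authentication code from Theorem \ref{SplBIBD.thm}, verify that $c(u-1)/(v-1) \leq 1/(cu)$, and then feed the code into Theorem \ref{main.thm}. The only place you diverge is in justifying the inequality $v-1 \geq u(u-1)c^2$. The paper obtains it from the counting done inside the proof of Theorem \ref{SplBIBD.thm}: there are exactly $v-1$ keys $K' \neq K$ sharing a message with $K$ in different columns, so $b \geq v$, i.e.\ $v(v-1)/(u(u-1)c^2) \geq v$, which is a self-contained Fisher-type argument. You instead invoke the necessary condition (\ref{v.eq}) for equitable orderability, $v \equiv 1 \bmod (u(u-1)c^2)$, and note that $v-1$ is a positive multiple of the modulus (positivity being guaranteed because a block already contains $uc \geq 2$ points, so $v \geq 2$). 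Both derivations are valid; yours is quicker given that the congruence is already recorded in the paper, but it rests on the cited result from \cite{PS}, whereas the paper's argument needs nothing beyond its own proof of Theorem \ref{SplBIBD.thm}. A small additional point in your favour: you correctly treat part 2 as a scheme for $u$ equiprobable secrets (the ``$k$'' in the statement is a slip for $u$), which is indeed what Theorem \ref{main.thm} delivers here.
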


\begin{proof}
We showed in Theorem \ref{SplBIBD.thm} that the authentication code arising from a $(v, u \times c,1)$-splitting BIBD is 
$c(u-1)/(v-1)$-secure against message-substitution attacks and 
$(1/cu)$-secure against key-substitution attacks. 
In the proof of Theorem \ref{SplBIBD.thm} it is shown that $b \geq v$, so
\[ v-1 \geq u(u-1)c^2,\]
or \[ \frac{c(u-1)}{v-1} \leq \frac{1}{cu}.\]
Hence the authentication code is $(1/cu)$-secure against both attacks and 
the stated result follows directly from  Theorem \ref{main.thm}.
\end{proof}

\section{Dual Authentication Codes}
\label{dual.sec}

Suppose we have an authentication code with sources $\SSS$, messages $\M$, and keyspace $\K$. The encoding matrix is denoted by $E$. Then we can construct another authentication code, which we call the \emph{dual code}, by simply interchanging the roles of messages and keys. Thus, the encoding matrix of the dual code is the matrix $F$ having entries \[F(m,s) = \{K \in \K: m \in e_K(s)\},\] where $s \in \SSS$ and $m \in M$. The keys in the dual code are the same as the messages in the original code. 

It is not hard to see that a key-substitution attack in an authentication code is ``equivalent'' to a message-substitution attack in the dual code. 

\begin{theorem}
\label{dual.thm}
A message-substitution attack in an authentication code is successful if and only if the corresponding key-substitution attack is successful in the dual authentication code. 
\end{theorem}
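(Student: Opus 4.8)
The plan is to unwind the definition of the dual code and of the two attack games, and to observe that after the obvious relabelling of symbols they describe literally the same event. Write the encoding rules of the dual code as $f_m(s) := F(m,s) = \{K \in \K : m \in e_K(s)\}$, so that $F$ is an authentication code with keyspace $\M$, source set $\SSS$ and message set $\K$. Everything rests on the single biconditional
\[ K \in f_m(s) \iff m \in e_K(s), \]
valid for all $K \in \K$, $m \in \M$, $s \in \SSS$ and immediate from the definition of $F$. As a preliminary check I would note that this identity also confirms $F$ really is an authentication code: if $K \in f_m(s) \cap f_m(s')$ then $m \in e_K(s) \cap e_K(s')$, which forces $s = s'$ by the defining property of $E$, so the sets $f_m(s)$ are pairwise disjoint over $s$.

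Next I would line up the two scenarios. In the original code a message-substitution attack is specified by a secret key $K$, the source $s$ with $m \in e_K(s)$, the observed message $m$, a substitute message $m' \neq m$, and a target source $s'$; by definition it succeeds precisely when $m \in e_K(s)$ and $m' \in e_K(s')$ with $s' \neq s$. The ``corresponding'' key-substitution attack in the dual code uses the same symbols in their swapped roles: the key of $F$ that is observed is $m$, the substitute key of $F$ is $m'$, the unknown message of $F$ is $K$, and the sources are $s$ and $s'$. By the definition of a key-substitution attack applied to $F$, this attack succeeds precisely when $K \in f_m(s)$ and $K \in f_{m'}(s')$ with $s' \neq s$. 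Applying the biconditional twice, $K \in f_m(s) \iff m \in e_K(s)$ and $K \in f_{m'}(s') \iff m' \in e_K(s')$, so the success condition of the dual key-substitution attack is, symbol for symbol, the success condition of the original message-substitution attack; this is the claimed equivalence. Since the distributions attached to $F$ are simply the images of those on $E$ under the key/message swap, with the source distribution unchanged, the two attacks in fact have equal success probabilities as well, although only the event-level statement is asserted.

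The only point needing care — and it is bookkeeping rather than a genuine obstacle — is that passing to the dual code swaps the roles of ``key'' and ``message'' while fixing ``source,'' and therefore also swaps which quantity is observed and which is the unknown over which the success probability is computed. One simply has to be careful to quantify the two game definitions over matching symbols before declaring their success conditions identical; once that is done the proof is a single substitution.
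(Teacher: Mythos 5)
Your proof is correct and is exactly the argument the paper has in mind: the paper offers no written proof (it treats the statement as immediate from the definition of the dual code), and your careful unwinding via the biconditional $K \in F(m,s) \iff m \in e_K(s)$, together with the check that observed/unknown roles swap consistently, is precisely that intended argument made explicit.
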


Note that the probability of a ``key'' in the dual code is the same as the probability of the corresponding message in the original code. Thus, keys in the dual code will be equiprobable if and only if messages in the original code are equiprobable. 
In all the examples we consider, we will assume that condition 3.\ of Theorem \ref{secrecy.lem} holds. This will ensure that a code and its dual both have equiprobable keys and messages. 

Theorem \ref{dual.thm} provides an alternative method to compute success probabilities of key-substitution attacks. 
We illustrate by reconsidering some of the constructions from Section \ref{constructions.sec}, where we computed these success probabilities from first principles. 

If we begin with an authentication code having an encoding matrix that is a $(v,k,\lambda)$-SBIBD, then the rows of the encoding matrix of the dual code, considered as sets, forms the \emph{dual design} of the SBIBD. 
It is a classical result in design theory that the dual design of an SBIBD is again a $(v,k,\lambda)$-SBIBD. 
Thus, Theorem \ref{dual.thm} provides a quick way to see that the optimal success probabilities
of the key-substitution and message-substitution attacks are identical in this particular situation
(as we showed previously in Theorem \ref{SBIBD}).

\begin{example} 
{\rm We return to Example \ref{731.exam}, where we constructed an authentication code from a $(7,3,1)$-SBIBD. We display the  encoding matrices of the code and the dual code:
\[
\begin{array}{c@{\hspace{.5in}}c}
E = 
\begin{array}{|c|c|c|}
\hline
s_1 & s_2 & s_3 \\ \hline 
0 & 1 & 3\\ \hline
1 & 2 & 4\\ \hline
2 & 3 & 5\\ \hline
3 & 4 & 6\\ \hline
4 & 5 & 0\\ \hline
5 & 6 & 1\\ \hline
6 & 0 & 2\\ \hline
\end{array}
& F = \begin{array}{|c|c|c|}
\hline
s_1 & s_2 & s_3 \\ \hline 
K_0 & K_6 & K_4\\ \hline
K_1 & K_0 & K_5\\ \hline
K_2 & K_1 & K_6\\ \hline
K_3 & K_2 & K_0\\ \hline
K_4 & K_3 & K_1\\ \hline
K_5 & K_4 & K_2\\ \hline
K_6 & K_5 & K_3\\ \hline
\end{array}
\end{array}
\]
The rows of $E$ are indexed by $K_0, \dots , K_6$ and the rows of $F$ are indexed by $0, \dots, 6$.
The rows of $F$ comprise the blocks of the dual $(7,3,1)$-SBIBD.
}
\end{example}

Suppose we start with an authentication code $E$ arising from an EDF and then we construct the dual authentication code, $F$.
Let $D_1, \dots , D_k$ be the $c$-subsets in the original EDF. It is not hard to see that the dual authentication code $F$ is generated from the EDF consisting of the $k$ sets $-D_1, \dots , -D_k$. The dual authentication code $F$ satisfies the same properties as $E$ because it is also obtained from an $(n,k,c,\lambda)$-EDF. Thus we see immediately from Theorem \ref{dual.thm} that the success probability of a key-substitution attack in $E$ is $c(k-1)/(n-1)$ (as we showed previously in Theorem \ref{EDF.thm}).

To illustrate, we present a small example.

\begin{example} 
{\rm We have already noted in Example \ref{EDF.exam} that the three sets $\{1,7,11\}$, $\{4,7,9\}$, $\{5,16,17\}$ form a $(19,3,3,3)$-EDF in $\zed_{19}$.
We develop these sets modulo 19 to obtain the following encoding matrices for a $3$-splitting authentication code
and its dual code:
\[E =
\begin{array}{|c|c|c|}
\hline
s_1 & s_2 & s_3 \\ \hline 
\{1,7,11\} & \{4,6,9\} & \{5,16,17\} \\ \hline
\{2,8,12\} & \{5,7,10\} & \{6,17,18\} \\ \hline
\{3,9,13\} & \{6,8,11\} & \{7,18,0\} \\ \hline
\vdots & \vdots & \vdots \\ \hline
\{0,6,10\} & \{3,5,8\} & \{4,15,16\} \\ \hline
\end{array}
\]
\[F = 
\begin{array}{|c|c|c|}
\hline
s_1 & s_2 & s_3 \\ \hline 
\{K_8,K_{12}, K_{18}\} & \{K_{10},K_{13}, K_{15}\} & \{K_2,K_{3}, K_{14}\} \\ \hline
\{K_9,K_{13}, K_{0}\} & \{K_{11},K_{14}, K_{16}\} & \{K_3,K_{4}, K_{15}\} \\ \hline
\{K_{10},K_{14}, K_{1}\} & \{K_{12},K_{15}, K_{17}\} & \{K_4,K_{5}, K_{16}\} \\ \hline
\vdots & \vdots & \vdots \\ \hline
\{K_7,K_{11}, K_{17}\} & \{K_{10},K_{13}, K_{15}\} & \{K_1,K_{2}, K_{13}\} \\ \hline
\end{array}
\]
The rows of $E$ are indexed by $K_0, \dots , K_{18}$ and the rows of $F$ are indexed by $0, \dots, 18$.
We can view $F$ as being generated from the EDF consisting of sets $\{8,12,18\}$, $\{10,13,15\}$, $\{2,3,14\}$.
}
\end{example}

Here is another example, which makes use of a BIBD with $\lambda = 1$ that is not a symmetric BIBD.
 
\begin{example} 
\label{1331.bibd}
{\rm We construct an authentication code from a $(13,3,1)$-BIBD. This design has $r = 6$, and $6 \equiv 0 \bmod 3$, so we can ensure that the corresponding authentication code has perfect secrecy. The $26$ blocks of the design can be generated from the two base blocks $\{0,1,4\}$ and $\{0,2,8\}$ by developing them modulo $13$. The $26$ by $3$  encoding matrix $E$ of the code is as follows:
\[
\begin{array}{c@{\hspace{.5in}}c}
\begin{array}{|r|r|r|}
\hline
s_1 & s_2 & s_3 \\ \hline 
0 & 1 & 4\\ \hline
1 & 2 & 5\\ \hline
2 & 3 & 6\\ \hline
3 & 4 & 7\\ \hline
4 & 5 & 8\\ \hline
5 & 6 & 9\\ \hline
6 & 7 & 10\\ \hline
7 & 8 & 11\\ \hline
8 & 9 & 12\\ \hline
9 & 10 & 0\\ \hline
10 & 11 & 1\\ \hline
11 & 12 & 2\\ \hline
12 & 0 & 3\\ \hline
\end{array}
& 
\begin{array}{|r|r|r|}
\hline
s_1 & s_2 & s_3 \\ \hline 
0 & 1 & 8\\ \hline
1 & 2 & 9\\ \hline
2 & 3 & 10\\ \hline
3 & 4 & 11\\ \hline
4 & 5 & 12\\ \hline
5 & 6 & 0\\ \hline
6 & 7 & 1\\ \hline
7 & 8 & 2\\ \hline
8 & 9 & 3\\ \hline
9 & 10 & 4\\ \hline
10 & 11 & 5\\ \hline
11 & 12 & 6\\ \hline
12 & 0 & 7\\ \hline
\end{array}
\end{array}
\]
The dual code has the following $13$ by $3$ encoding matrix $F$:
\[
\begin{array}{|c|c|c|}
\hline
s_1 & s_2 & s_3 \\ \hline 
\{K_0,K_{13}\} & \{K_{12},K_{25}\} & \{K_9,K_{18}\} \\ \hline
\{K_{1},K_{14}\} & \{K_{0},K_{13}\} & \{K_{10},K_{19}\} \\ \hline
\{K_{2},K_{15}\} & \{K_{1},K_{14}\} & \{K_{11},K_{20}\} \\ \hline
\{K_{3},K_{16}\} & \{K_{2},K_{15}\} & \{K_{12},K_{21}\} \\ \hline
\{K_{4},K_{17}\} & \{K_{3},K_{16}\} & \{K_{0},K_{22}\} \\ \hline
\{K_{5},K_{18}\} & \{K_{4},K_{17}\} & \{K_{1},K_{23}\} \\ \hline
\{K_{6},K_{19}\} & \{K_{5},K_{18}\} & \{K_{2},K_{24}\} \\ \hline
\{K_{7},K_{20}\} & \{K_{6},K_{19}\} & \{K_{3},K_{25}\} \\ \hline
\{K_{8},K_{21}\} & \{K_{7},K_{20}\} & \{K_{4},K_{13}\} \\ \hline
\{K_{9},K_{22}\} & \{K_{8},K_{21}\} & \{K_{5},K_{14}\} \\ \hline
\{K_{10},K_{23}\} & \{K_{9},K_{22}\} & \{K_{6},K_{15}\} \\ \hline
\{K_{11},K_{24}\} & \{K_{10},K_{23}\} & \{K_{7},K_{16}\} \\ \hline
\{K_{12},K_{25}\} & \{K_{11},K_{24}\} & \{K_{8},K_{17}\} \\ \hline
\end{array}
\]
As can be seen, the dual code is $2$-splitting.
The rows of $E$ are indexed by $K_0, \dots , K_{25}$ and the rows of $F$ are indexed by $0, \dots, 12$.
Theorem \ref{BIBD.thm} states that the optimal success probability of a key-substitution attack for $E$ is $1/3$. This is of course the same as the optimal success probability of a message-substitution attack for $F$, by Theorem \ref{dual.thm}.
}
\end{example}

We now explore some additional properties relating authentication codes to their duals.

\begin{theorem}
\label{dualproperties.thm}
Suppose a $c$-splitting authentication code for $u$ sources has $b$ equiprobable keys, equiprobable message encoding,
$v$ messages, perfect secrecy, and  $P_{d_0} = cu/v$. 
Then the dual authentication code is a $(bc/v)$-splitting authentication code for $u$ sources that has $v$ equiprobable keys,  equiprobable message encoding, $b$ messages, perfect secrecy, and  $P_{d_0} = cu/v$. 
\end{theorem}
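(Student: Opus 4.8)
The plan is to deduce every asserted property of the dual code $F$ from one structural fact about $E$ — that $E$ satisfies condition 3 of Theorem~\ref{secrecy.lem} — together with the observation that this condition transforms predictably under dualization. First I would invoke Theorem~\ref{secrecy.lem}: since $E$ has perfect secrecy and $P_{d_0}=cu/v$, every message occurs the same number of times in each column of $E$. Written in terms of the sets $\kappa(m,s)=\{K : m\in e_K(s)\}$ from the proof of that theorem, this says $|\kappa(m,s)|$ is independent of $m$ for each fixed source $s$; the counting in equation~(\ref{kappa.eq}) then forces the common value to be $bc/v$, and the same counting shows it does not depend on $s$ either. Hence $|\kappa(m,s)|=bc/v$ for every $s\in\SSS$ and every $m\in\M$, and in particular $bc/v$ is a positive integer.

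Next I would verify that $F$, defined by $F(m,s)=\kappa(m,s)$, is a legitimate $(bc/v)$-splitting authentication code for $u$ sources with $v$ keys and $b$ messages. The splitting number is immediate from $|F(m,s)|=|\kappa(m,s)|=bc/v$. The defining disjointness property $F(m,s)\cap F(m,s')=\emptyset$ for $s\neq s'$ holds because a key $K$ lying in both sets would give $m\in e_K(s)\cap e_K(s')$, which is forbidden in $E$. For the probability structure, the probability of a key of $F$ equals the probability of the corresponding message of $E$ (as noted after Theorem~\ref{dual.thm}); a one-line computation of $\Prob[m]=\sum_{s}\Prob[m\mid s]\,\Prob[s]$, using that $\Prob[m\mid s]$ is proportional to $|\kappa(m,s)|=bc/v$ and hence constant in $m$, shows the keys of $F$ are equiprobable, and equiprobable message encoding is the standing model assumption for $F$ (this is exactly the situation flagged in the paragraph following Theorem~\ref{dual.thm}).

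The crux of the argument is then to check that $F$ itself satisfies condition 3 of Theorem~\ref{secrecy.lem}. Fix a column of $F$, indexed by a source $s$. The number of occurrences in that column of a message of $F$, i.e.\ of a key $K$ of $E$, is $|\{m : K\in\kappa(m,s)\}|=|\{m : m\in e_K(s)\}|=|e_K(s)|=c$, which is independent of both $K$ and $s$ — and this is the only place the $c$-splitting hypothesis on $E$ is used. (The count is consistent, since $F$ has $v$ rows, splitting number $bc/v$, and $b$ messages, and $v\cdot (bc/v)/b=c$.) Applying Theorem~\ref{secrecy.lem} to $F$ now yields perfect secrecy for $F$ and $P_{d_0}(F)=(bc/v)\,u/b=cu/v$, which completes the proof.

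I do not expect a genuine obstacle: the argument is essentially bookkeeping once one recognizes that condition 3 of Theorem~\ref{secrecy.lem} is the right invariant to track. The only points requiring care are keeping the roles of keys, messages, and sources straight between $E$ and $F$, and confirming that the probability hypotheses needed to apply Theorem~\ref{secrecy.lem} to $F$ (equiprobable keys and equiprobable message encoding) are indeed in force — both of which are handled in the second paragraph above.
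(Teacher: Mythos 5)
Your proposal is correct and follows essentially the same route as the paper: apply Theorem~\ref{secrecy.lem} to $E$ to get $|\kappa(m,s)|=bc/v$ for all $m,s$ (so $F$ is $(bc/v)$-splitting), observe that each key of $E$ appears exactly $c$ times in every column of $F$, and then apply Theorem~\ref{secrecy.lem} in the reverse direction to $F$ to obtain perfect secrecy and $P_{d_0}=(bc/v)u/b=cu/v$. Your extra checks (disjointness of the cells $F(m,s)$ across sources and the equiprobability of the dual code's keys via $\Prob[m]$ being constant) are details the paper leaves implicit, but they do not change the argument.
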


\begin{proof} 
The proof of Theorem \ref{secrecy.lem} establishes that equation (\ref{kappa.eq}) holds, i.e., every message $m$ occurs $bc/v$ times in each column $s$ of the encoding matrix $E$. 
This immediately implies that the dual code is $(bc/v)$-splitting. 
Therefore the dual code is a $(bc/v)$-splitting authentication code for $u$ sources having $v$ equiprobable keys and  equiprobable message encoding. Each ``message'' in the dual code occurs $c$ times in each column $s$ of $F$ (where $F$ is the the encoding matrix of the dual code). 
Therefore, from Theorem \ref{secrecy.lem}, the  dual code has perfect secrecy and 
\[ P_{d_0} = \frac{\frac{bc}{v} \times u}{b} = \frac{cu}{v}.\]
\end{proof}

We note that the hypotheses of Theorem \ref{dualproperties.thm} are satisfied 
whenever we construct an authentication code from an equitably ordered BIBD or splitting BIBD.

The authentication code presented in Example \ref{1331.bibd} satisfies the hypotheses of  Theorem \ref{dualproperties.thm} with  $v = 13$, $b = 26$, $u = 3$, $c=1$.  Thus, the dual code is $2$-splitting with perfect secrecy, each ``message'' $K_i$ occurs once in each column of $F$. The code and dual code both have $P_{d_0} = 3/13$.

\section{Summary and Discussion}
\label{summary.sec}

Our goal in this paper has been to develop some theory to better understand various connections between authentication codes and threshold schemes, as well as how certain combinatorial designs can be used to construct these cryptographic objects. To this end, we have proven a simple direct equivalence of certain authentication codes and $(2,2)$-threshold schemes. Further, we have introduced the notion of a key-substitution attack and observed that it is identical to a message-substitution attack in a ``dual authentication code.''

We have already mentioned that robust $(k,n)$-threshold schemes are usually constructed by ``combining'' an algebraic object such as a difference set, EDF, or AMD code with a Shamir threshold scheme. These objects all live in a finite group and, consequently, the construction of the resulting threshold schemes is algebraic. The main equivalence result we have proven (Theorem \ref{main.thm}) is a purely combinatorial result. It would be of interest to extend our equivalence theorem in some way to handle robust $(k,n)$-threshold schemes in a strictly combinatorial setting.
There is a purely combinatorial analogue of Shamir threshold schemes---namely, orthogonal arrays---so this is perhaps possible.

\end{document}